\tikzset{state/.style={draw,ellipse}}
\newif\ifextended
\newcommand{\Hand}{\raisebox{-.1cm}{\HandRight}}
\newcommand{\hourglass}{\mathbin{\scalerel*{\@hgpic}{\ensuremath{\sigma}}}}\newcommand{\@hgpic}{\setlength{\unitlength}{0.34cm}\begin{picture}(1,1.5)\thicklines \put(0,0){\line(2,3){1}}\put(1,1.5){\line(-1,0){1}}\put(0,1.5){\line(2,-3){1}}\put(1,0){\line(-1,0){1}}\end{picture}}
\newcommand{\wait}{\$}
\newcommand{\ZO}{\{0, 1\}}
\newcommand{\eps}{\varepsilon}
\newcommand{\Step}{\mathsf{Step}}
\newcommand{\NT}{\mathsf{NT}}
\newcommand{\rec}{\mathsf{Rec}}
\newcommand{\PEG}{\mathsf{PEG}}
\newcommand{\gnot}{\text{\tt !}}
\newcommand{\gand}{\text{\tt \&}}
\newcommand{\HasState}{{\mathsf{State}}}
\newcommand{\HasLabel}{{\mathsf{Label}}}
\newcommand{\Path}{{\mathsf{Path}}}
\newcommand{\HasNeighbourhood}{{\mathsf{Neighbourhood}}}
\newcommand{\AutomatonAccepts}{{\mathsf{AutomatonAccepts}}}
\newcommand{\Transition}{{\mathsf{Transition}}}
\newcommand{\ACCEPT}{{\eps}}
\newcommand{\FAIL}{{\mathsf{FAIL}}}
\newcommand{\cA}{{\mathcal A}}
\newcommand{\cE}{{\mathcal E}}
\newcommand{\bbN}{{\mathbb N}}
\newcommand{\cG}{{\mathcal G}}
\newcommand{\cH}{{\mathcal H}}
\newcommand{\cL}{{\mathcal L}}
\newcommand{\cM}{{\mathcal M}}
\newcommand{\cN}{{\mathcal N}}
\newcommand{\cP}{{\mathcal P}}
\newcommand{\bbS}{{\mathbb S}}
\newcommand{\NC}{\mathsf{NC}}
\newcommand{\PTIME}{\mathsf{P}}
\newcommand{\bsni}{\bigskip\noindent}
\newcommand{\Online}{\mathsf{Online}}
\newcommand{\BinDepth}{\mathsf{BinDepth}}
\newcommand{\BinTree}{\mathsf{BinTree}}
\theoremstyle{definition}
\newtheorem{theorem}{Theorem}
\newtheorem{definition}[theorem]{Definition}
\newtheorem{corollary}[theorem]{Corollary}
\newtheorem{claim}[theorem]{Claim}
\newtheorem{conjecture}[theorem]{Conjecture}
\newtheorem{invariant}{Invariant}
\newcommand{\ps}[1]{^{(#1)}}
\newcommand{\thankstext}{Bruno Loff is the recipient of FCT postdoc grant number SFRH/\allowbreak BPD/\allowbreak 116010/\allowbreak 2016. This work is partially funded by the ERDF through the COMPETE 2020 Programme  within project POCI-01-0145-FEDER-006961, and by National Funds through the FCT as part of project  UID\allowbreak /EEA\allowbreak /50014\allowbreak /2013. This work was partially supported by CMUP (UID\allowbreak /MAT\allowbreak /00144\allowbreak /2019), FCT (Portugal), FEDER and PT2020. The authors would like to thank Markus Holzer, Martin Kutrib, Leen Torenvliet and Jurgen Vinju for fruitful discussions on this subject.}
\title{The computational power of\\ Parsing Expression Grammars\footnote{This is a revised and expanded version of a paper presented at the 22nd International Conference on Developments in Language Theory (DLT), held in Tokyo, Japan, September 10-14, 2018.}}
\begin{document}
\author[1,2]{Bruno Loff\corref{cor1}}
\ead{bruno.loff@gmail.com}
\author[1,3]{Nelma Moreira}
\ead{nam@dcc.fc.up.pt}
\author[1,3]{Rogério Reis}
\ead{rvr@dcc.fc.up.pt}
\address[1]{DCC, Faculdade de Ciências da Universidade do Porto}
\address[2]{CRACS, INESC-Tec}
\address[3]{CMUP, Universidade do Porto}
\cortext[cor1]{Corresponding author}
\date{\today}

\begin{abstract}
  We study the computational power of parsing expression grammars (PEGs). We begin by constructing PEGs with unexpected behaviour, and surprising new examples of languages with PEGs, including the language of palindromes whose length is a power of two, and a binary-counting language.
  
  We then propose a new computational model, the \emph{scaffolding automaton}, and prove that it exactly characterises the computational power of parsing expression grammars (PEGs).

  Several consequences will follow from this characterisation: (1) we show that  PEGs are computationally ``universal'', in a certain sense, which implies the existence of a PEG for a P-complete language; (2) we show that there can be no pumping lemma for PEGs; and (3) we show that PEGs are strictly more powerful than online Turing machines which do $o(n/(\log n)^2)$ steps of computation per input symbol.
\end{abstract}

\begin{keyword}
parsing expression grammar \sep context-free grammar \sep pumping lemma \sep real-time Turing machine \sep scaffolding automata
  
  \MSC{68Q05,68Q42,68Q45}
\end{keyword}

\maketitle

\newpage

\tableofcontents

\section{Introduction}

Parsing expression grammars are a recognition-based system for parsing of formal languages. They were defined by Ford \cite{ford2004parsing}, who showed equivalence with earlier parsing systems by Birman and Ullman \cite{birman1970parsing,birman1970tmg} that are able to recognise the class of  \emph{top-down parsing languages} (TDPLs \cite{aho1972theory}).

As a language formalism, PEGs offer an attractive syntax and an efficient linear-time parsing algorithm which is nonetheless simple to implement. This led to a recent trend, which pushes for the adoption of PEGs, both as a theoretical subject \cite{chida17:_linear_parsin_expres_gramm,garnock-jones18:_recog_gener_terms_deriv_parsin_expres_gramm,henglein17:_peg,mizushima10:_packr,Moss:2017ux,Redziejowski:2013bw,Redziejowski:bc,redziejowski18:_tryin_under_peg}, and as a practical tool for parser generators \cite{Becket:2008vb,grimm2006better,Ierusalimschy:2009hl,koprowski11:_trx,kuramitsu16:_fast_flexib_declar_const_abstr,laurent15:_parsin,maidl16:_error_parsin_expres_gramm,Matsumura:2015uu,Medeiros:2008tya}.
See Ford's webpage \cite{fordwebpage} for an extensive bibliography of work around PEGs.

The influence of PEGs is illustrated by the surprising fact that, despite having been introduced only fifteen years ago, the number of available PEG-based parser generators already seems to nearly-match or even supersede the number of parser generators based on any other single parsing method, even when compared with methods which are many decades older.\footnote{We estimate this to be true, based on consulting the Wikipedia page ``Comparison of parser generators'', and searching GitHub for ``parser generator X'', and then counting how many projects appear which use a given method X. Doing so, one obtains the following numbers (ca. September 2019):
  \begin{center}
    \begin{tabular}{r|c|c|c|c|c|c}
                & LR & LL & LALR & GLR & Earley & \textbf{PEG}\\
Wikipedia & 26 & 33 & 63   & 23  & 7      & \textbf{48}\\
GitHub    & 62 & 86 & 77   & 10   & 9      & \textbf{122}
    \end{tabular}
\end{center}} This seems to be due to the simplicity of the formalism, which allows for the quick appearance of many small DIY projects; the situation is reversed if limits one's attention to high-quality projects, and there does not yet appear to be any serious global tendency to replace older technologies by PEGs. Nonetheless, a few high-quality PEG-based parser generators do exist (e.g. \emph{rats!} \cite{grimm2006better}, or the \emph{Scala Standard Parser-Combinator Library}), and there was at least one serious, influential attempt at creating a programming language which intrinsically relied on PEG as a parsing technology --- the \emph{Fortress} programming language \cite{steele1999growing}, which was being developed by Guy Steele's team at Sun Microsystems. The project is now defunct, but Fortress was once considered as a possible \emph{next-generation} replacement for the \emph{Java} programming language \cite{flood2008fortress}! 

Despite this enthusiasm for PEGs, we have also started seeing some objections of a theoretical nature. On one hand, proving the correctness of a given parsing expression grammar is often more difficult than one would like, even for simple examples\footnote{For example, the relatively simple grammar for the $a^n b^n c^n$ language which appears in Ford's original paper \cite{ford2004parsing}, has a (fixable) bug, which eluded discovery for over a decade (including to us, when we read Ford's paper) until the bug was pointed out by a recent paper of Garnock-Jones et al.~\cite{garnock-jones18:_recog_gener_terms_deriv_parsin_expres_gramm}.}. This makes PEGs somewhat problematic as a model of formal languages. On the other hand, there is no natural example of a language which is proven not to have PEGs. We believe that the present work will help in understanding why this is the case.

A first naive look at PEGs may suggest that their computational power should be roughly similar to that of deterministic context-free grammars \cite{ford2004parsing}. Indeed it is known that deterministic context-free languages have PEGs \cite{birman1970parsing}. But already Aho and Ullman \cite{aho1972theory} had shown that the $a^n b^n c^n$ language, which is not context-free, is still a TDPL, and hence has a PEG \cite{ford2004parsing}.

One may still hope that the computational power of PEGs can be contained, in some way, akin to how we can use pumping lemmas to separate the Chomsky hierarchy (e.g. \cite{bar1964formal,li1995new,Hayashi:aa,yu1989pumping,amarilli2012proof}). The following question appears in Aho and Ullman's book \cite{aho1972theory}, and in Ford's article \cite{ford2004parsing}:

\begin{center}
  \em    Is there a context-free language without a parsing expression grammar?
\end{center}
\noindent
It is possible to prove that if any such language exists, then Greibach's \emph{hardest context-free language} $\cH$~\cite{greibach1973hardest} also has no PEGs. So the above problem is equivalent to asking for a proof that $\cH$ has no parsing expression grammar. But no PEG is known, even for the much simpler language of \emph{palindromes}. The following questions are both open:

\begin{center}\em
  Can a parsing expression grammar recognise the language of palindromes?

  \medskip
  Is there any linear-time language without a parsing expression grammar?
\end{center}

In fact, the only method we know to prove that a language has no PEG is by using the time-hierarchy theorem of complexity theory \cite{hartmanis1965computational}: using diagonalisation one may construct some language $L_2$ which is decidable, say, in time $n^2$ (by a random-access machine), but not in linear time, and because PEGs can be recognised in linear time using the tabular parsing algorithm of Birman and Ullman \cite{birman1970parsing} (or packrat parsing \cite{ford02:_packr,Ford:A6T0y0WG}), there will be no parsing expression grammar for $L_2$.

This stands in stark contrast with our understanding of, say, context-free languages. In that scenario, one may also construct a language $L_4$ which is decidable in time $n^4$, which cannot be decided in time $n^3$, and hence $L_4$ cannot be context-free (since the CYK algorithm decides any context-free language in time $n^3$, see, e.g., Hopcroft's book \cite{hopcroft2001introduction}). But this brings us no real insight on what it means to be context-free. To understand this, we make use of \emph{pumping lemmas}, and using such lemmas we can easily provide, say, a linear-time-decidable language which is not context-free. A pumping lemma implies a serious limitation on the computational power of context-free languages, which does not apply to universal models of computation, such as Turing machines or random-access machines.

Our current understanding of universal computation, by contrast, is extremely poor. For example, it is a longstanding open problem, to show that linear-time random-access machines cannot be simulated by two-tape Turing machines in linear time, even though it seems intuitive that this should be true. Indeed this problem is well beyond the current state of the art in computational complexity, where such lower-bounds are notoriously difficult to come by. It is also an open problem to provide any context-free language which cannot be decided by a two-tape Turing machine in linear time --- for one-tape Turing machines such a separation is known\footnote{This was first proven for palindromes; see Li and Vitanyi \cite[][\S6.1 and \S6.13]{li1995new}.}.

\medskip
A principal claim of this article is that the recognition procedure underlying parsing expression grammars is, in some sense, ``universal'', and so it will be as difficult to understand as that of a multi-tape Turing machine. A solution to the above questions, thus, may well require a breakthrough in our ability to prove computational complexity lower-bounds.

\medskip
With this in mind, the layout of the article is as follows. In Section \ref{sec:preliminaries}, we provide a formal definition of PEGs, and in Section \ref{sec:example-PEGs} we show a few examples of PEGs with surprising behaviour, and of languages which, unexpectedly, have PEGs. This includes the language of palindromes whose length is a power of two, and it is also shown that PEGs can do a form of counting.

In Section \ref{sec:scaffolding-automata}, we describe a new computational model, the \emph{scaffolding automaton}, and show that it exactly characterises the computational power of PEGs. This is our main result, and provides what we believe to be the right machine model for parsing expression grammars.
We will make good use of this characterisation in Section \ref{sec:applications}, where we show the following results.

\begin{itemize}
\item We revisit the example languages of Section \ref{sec:example-PEGs}, and construct scaffolding automata for them, for the sake of becoming familiar with the model.
\item We show that PEGs are computationally ``universal'', in the following sense:  take any computable function $f:\ZO^\ast\to\ZO^\ast$; then there exists a computable function $g: \ZO^\ast \to \bbN$ such that $$\{ f(x) \$^{g(x)} x \mid x \in \ZO^\ast \}$$ has a PEG. This result may be used to construct a PEG language which is complete for $\PTIME$ under logspace reductions. This stands in contrast to context-free languages, which cannot be $\PTIME$ complete under logspace reductions unless $\PTIME \subseteq \NC_2$.
\item We show that there can be no pumping lemma for PEGs. There is no total computable function $A$ with the following property: for every PEG $G$, there exists $n_0$ such that for every string $x \in \cL(G)$ of size $|x| \ge n_0$, the output $y = A(G, x)$ is in $\cL(G)$ and has $|y| > |x|$.
\item We show that PEGs are strongly non real-time for Turing machines: There exists a language with a PEG, such that neither it nor its reverse can be recognised by any multi-tape online Turing machine which is allowed to do only $o(n/(\log n)^2)$ steps after reading each input symbol.
\end{itemize}

\section{Preliminaries}\label{sec:preliminaries}

In this section we will cover some notation, and give a formal definition of parsing expression grammars.

\paragraph{Notation.} For each $k\in \bbN$, let $(k)_2 \in \ZO^\ast$ be its shortest binary representation, and $(k)_2^r$ to denote the reversal of its shortest binary representation.
An \emph{alphabet} $\Gamma$ is a finite set of symbols such that $\varnothing \notin \Gamma$.
For a natural number $n \ge 0$, we denote $[n] = \{0, \ldots, n\}$, $[n) = \{0, \ldots, n-1\}$, and $(n] = \{1,\ldots,n\}$. We will use $\lambda$ to denote the empty word, and $\eps$ to denote a parsing expression which accepts the empty word.

\begin{definition}\label{def:parsing-expressions}
  Let $\Sigma, \NT$ be two disjoint alphabets; the symbols in $\Sigma$ are called \emph{terminal} symbols, and those in $\NT$ are called \emph{non-terminal} symbols. Then, the set $\cE(\Sigma, \NT)$ of \emph{parsing-expressions over $\Sigma$ and $\NT$} is defined inductively.
  \begin{itemize}
  \item At the base of the induction we have $\Sigma \cup \NT \cup \{ \ACCEPT, \FAIL \} \subseteq \cE(\Sigma, \NT)$.
    \item If $e \in \cE(\Sigma, \NT)$, we will have $\gnot e$ and $\gand e$ in $\cE(\Sigma, \NT)$.
    \item If $e_1, e_2 \in \cE(\Sigma,\NT)$, we will have $e_1 e_2$ and $e_1 / e_2$ in $\cE(\Sigma, \NT)$.
\end{itemize}
\end{definition}

\begin{definition}\label{def:peg}
  A \emph{parsing expression grammar} $\cG$ is a tuple $\langle \Sigma, \mathsf{NT}, R, S\rangle$, where
  \begin{itemize}
  \item $\Sigma$ is an alphabet of so-called \emph{terminal symbols}.
  \item $\mathsf{NT}$ is an alphabet of so-called \emph{non-terminal symbols}, disjoint from $\Sigma$.
  \item $R: \mathsf{NT} \to \cE(\Sigma, \NT)$ is a function defining the \emph{rules} of $\cG$, and associates a $(\Sigma,\NT)$-parsing-expression to each non-terminal symbol.
  \item $S \in \NT$ is the \emph{starting non-terminal}.
  \end{itemize}
\end{definition}

\bsni
When writing down a parsing expression grammar, the notation $A \leftarrow e$ is used to signify $R(A) = e$. The reason one uses the left arrow notation is to emphasise that PEGs correspond to a \emph{recognition procedure}, and are not to be thought of as a generative model.

\bsni
Ford \cite{ford2004parsing} defines parsing expressions that allow for various operations, such as the \emph{zero-or-more repetitions} operator ``{\tt *}'', or the \emph{any character} symbol ``{\tt .}''. As explained in Ford's paper \cite{ford2004parsing}, these operators can be expressed by using the operators appearing in Definition \ref{def:parsing-expressions}, together with the grammars of Definition \ref{def:peg}. This is similar to how one would define such operators using context-free grammars, so we will not explicitly include these operators as part of Definition \ref{def:parsing-expressions}. For the sake of example, the \emph{zero-or-more repetitions} operator $A^{\text{\tt *}}$, applied to a non-terminal $A$, may be replaced by a new non-terminal $\mathsf{Astar}$ together with the rule $\mathsf{Astar} \leftarrow A\; \mathsf{Astar} \;/\; \eps$.

\bsni
\Hand\ \
The \emph{any character} symbol ``{\tt .}'', which we will be using extensively throughout, may be replaced with $(a / b / \ldots)$ for each terminal symbol $a, b, \ldots$ of $\Sigma$. After we define the recognition procedure underlying a parsing expression grammar, in Definition \ref{def:recognition} below, it may be seen that the parsing expression ``$\gnot {\tt .}$'' recognizes exactly the empty string at the end of the input.

\bsni
In order to define a rule $A \leftarrow B / C / \ldots$, we will write rules of the form $A \leftarrow B$, $A \leftarrow C$, \emph{etc}, and say they are \emph{alternatives} of the non-terminal symbol $A$. So, for example, if we say $A \leftarrow B A$ and $A \leftarrow \eps$ are alternatives of $A$, we mean that the rule for $A$ is $R(A) = B A \;/\; \eps$. We will only do this when the order in which the alternatives appear in the rule is indifferent.

\bsni
Each parsing expression grammar defines an associated recognition procedure. This procedure gives an operational meaning to each PEG.

\begin{definition}[Recognition]\label{def:recognition}
  Let $\cG = \langle \Sigma, \NT, R, S\rangle$ be a parsing expression grammar. The \emph{recognition map} is a partial function
  \[
    \rec_\cG: \cE(\Sigma,\NT) \times \Sigma^\ast \to \Sigma^\ast \cup \{ \FAIL \};
  \]
  this map is defined by Algorithm \ref{recognition-procedure} appearing below.
  If $\rec_\cG(e, x) = \FAIL$, we say that expression $e$ \emph{rejects} input $x$; and if $\rec_\cG(e,x) = x'$ outputs a prefix $x'$ of $x$, we say that expression $e$ \emph{accepts} $x$, and \emph{consumes} $x'$. If $\rec_\cG(e, x) = x$, i.e.~$e$ accepts $x$ and consumes all of $x$, then we say the expression $e$ \emph{recognises} $x$. Otherwise $\rec_\cG(e, x)$ is \emph{undefined}, which happens precisely when the recognition procedure entered an infinite loop. We say that $\cG$ is \emph{total} if its recognition map is total, i.e.~if it never enters an infinite loop, on any input.
\end{definition}

\bsni
\Hand\ \  The notions \emph{rejects}, \emph{accepts}, \emph{consumes} and \emph{recognises} will be frequently used throughout the paper, and the reader may refer to the above definition to remember what they mean. It is important to understand that a parsing expression $e$ may accept a string $x$, without consuming all of it. For example the expression $\gand(a a)$ accepts the string $a a$ but consumes no symbol in it.

\bigskip
\setlength{\intextsep}{4pt plus 1.0pt minus 2.0pt}
\begin{algorithm}[h]
    \caption{Recognition Procedure $\rec_\cG(E, x):$}\label{recognition-procedure}
    \begin{algorithmic}[1]
      \small
      \Require{$E \in \cE(\Sigma,\NT), x \in \Sigma^*$}
      \Ensure{$\rec_\cG(E, x) \in \Sigma^* \cup \{ \FAIL \}$}
      
      \If{$E = \eps$}  {\bf return} the empty string $\lambda$
      \ElsIf{$E = \FAIL$}\  {\bf return} $\FAIL$
      \ElsIf{$E = a \in \Sigma$}
      \If{$x = a z$ for some $z$} {\bf return} $a$ {\bf else return} $\FAIL$ \EndIf
      \ElsIf{$E = \gnot e$}
      \If{$\rec_\cG(e, x) = \FAIL$} {\bf return} $\lambda$ {\bf else return} $\FAIL$
      \EndIf
      \ElsIf{$E = \gand e$}
      \If{$\rec_\cG(e, x) \in \Sigma^*$} {\bf return} $\lambda$ {\bf else return} $\FAIL$
      \EndIf
      \ElsIf{$E = e_1 e_2$}
      \If{$\rec_\cG(e_1, x) = y_1 \in \Sigma^\ast$ {\bf and } $x = y_1 z$ {\bf and } $\rec_\cG(e_2, z) = y_2 \in \Sigma^\ast$}
      \State {\bf return} $y_1 y_2$
      \Else\ {\bf return} $\FAIL$
      \EndIf
      \ElsIf{$E = e_1 / e_2$}
      \If{$\rec_\cG(e_1, x) \in \Sigma^\ast$}\ {\bf return} $\rec_\cG(e_1, x)$
      \Else\ {\bf return} $\rec_\cG(e_2, x)$
      \EndIf
      \ElsIf{$E = A \in \NT$}\ {\bf return} $\rec_\cG(R(A), x)$
      \EndIf
    \end{algorithmic}
  \end{algorithm}

\begin{definition}\label{def:PEGclass}
  A total PEG $\cG = \langle \Sigma, \NT, R, S\rangle$ is said to \emph{recognise} the language $\cL(\cG) = \{ x\in \Sigma^\ast \mid \rec_\cG(S, x) = x \}$.

  Then $\PEG$ is the class of languages recognised by total PEGs.
\end{definition}

One consequence of the results in this paper is that no algorithm can decide whether a PEG is total. Ford's original paper \cite{ford2004parsing} defined a notion, that of \emph{well-formed} parsing expression grammar, which was inherited from Birman and Ullman \cite{birman1970parsing}. A well-formed PEG is a PEG which obeys a certain syntactic restriction; this restriction guarantees that the above recognition procedure will not enter an infinite loop (but not all total PEGs are well-formed).

Informally, a PEG is well-formed if it avoids left recursion.
To avoid excessive formalism, in this paper we will not concern ourselves with the formal definition of well-formed PEGs. All the PEGs appearing in this paper are total, and, for the readers familiar with the notion of well-formedness, it will be possible to see that they are also well-formed. Furthermore, every theorem in this paper referring to ``total'' PEGs will still hold if one restricts our attention to ``well-formed'' PEGs.

Furthermore, there is an algorithm which accepts a PEG $\cG$ as input, and outputs a well-formed PEG $\cG'$, such that $\cG'$ recognises the same language as $\cG$ whenever $\cG$ is total. This is akin to the fact that, despite it being undecidable if a given Turing machine runs in time $n^2$, one can take any Turing machine $\cM$ and convert it into a (multitape) Turing machine $\cM'$ which does run in time $n^2$, and which decides the same language as $\cM$ if $\cM$ also runs in time $n^2$ \cite[see][]{balcazar1988structural,arora2009computational}.

\section{Illustrative Examples}\label{sec:example-PEGs}

In this section we will study some examples which were instrumental for us to understand the computational power of the model.

\subsection{Power-Length PEGs}
Our initial expectations for the computational power of PEGs were that we should be able to treat them in a similar way as with context-free grammars, by showing a pumping lemma for them.

This owed not so much to what we knew about the computational power of PEGs --- which already Birman and Ullman \cite{birman1970parsing}, and Ford \cite{ford2004parsing}, had shown surpasses that of CFGs --- but rather to the context in which one studies PEGs: \emph{if PEGs are regarded in the context of formal languages, then we should be able to prove some kind of pumping lemma}. But soon we stumbled on the following example from the PhD thesis of Birman \cite{birman1970tmg}:

\begin{theorem}
  The unary language of words whose length is a power-of-2
  \[
    \cP_2 = \{ a^{2^n} \mid n \ge 0 \}
  \]
  is in $\PEG$.
\end{theorem}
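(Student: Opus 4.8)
The plan is to realise the recursive structure of $\cP_2$ directly inside a PEG. Observe that $a^j \in \cP_2$ iff $j=1$, or $j$ is even and $a^{j/2} \in \cP_2$; equivalently, writing $\rho(j)$ for the largest power of two that is $\le j$ (with $\rho(0)=0$), we have $\rho(1)=1$, $\rho(j) = 2\,\rho(\lfloor j/2\rfloor)$ for every $j \ge 2$, and $a^j \in \cP_2 \iff \rho(j) = j$. I would therefore aim to build a PEG containing a non-terminal $A$ whose recognition behaviour on unary inputs is exactly $\rec_\cG(A, a^j) = a^{\rho(j)}$ for $j \ge 1$ — it always succeeds, and it consumes precisely the longest power-of-two-length prefix — with $A$ rejecting the empty word. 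Then the start rule $S \leftarrow A\,\gnot{\tt .}$ gives $\rec_\cG(S, a^j) = a^j \iff \rho(j)=j$, so that $\cL(\cG) = \cP_2$. Since any recursive sub-call made while parsing a unary word is again on a unary word, it suffices to analyse every non-terminal on inputs of the form $a^j$.

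The non-terminal $A$ would mirror the recursion for $\rho$. One alternative handles the base case ``exactly one symbol is present'' using a syntactic predicate, e.g. $\gnot(aa)$ followed by consuming one symbol. A second alternative, guarded by $\gand(aa)$ to ensure at least two symbols remain, must consume $2\,\rho(\lfloor j/2\rfloor)$ symbols; this is the step I expect to be the real obstacle. A PEG has no way to store a number or to re-scan a prefix it has already consumed, so ``recurse on half of what remains, then double the amount matched'' has to be encoded purely structurally, out of ordered choice, concatenation and the predicates $\gnot,\gand$. I would look for a gadget built from nested lookaheads in which two scans proceed in a coupled fashion — one advancing a symbol at a time while a predicate-scan advances two symbols at a time — so that one scan effectively measures $\lfloor j/2\rfloor$ against $j$ while the other does the actual consumption and then ``replays'' the same length. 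Birman's grammar presumably packages precisely such a device, and faithfully reconstructing it (and then presenting it readably) is the bulk of the work.

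Once a candidate grammar is fixed, correctness would follow by strong induction on $j$: assuming $\rec_\cG(A, a^i) = a^{\rho(i)}$ for every $1 \le i < j$, one unwinds Algorithm \ref{recognition-procedure} on $\rec_\cG(A, a^j)$, checking that the base alternative fires exactly when $j=1$, that $\gand(aa)$ correctly gates the recursive alternative for $j \ge 2$, and that the recursive alternative indeed consumes $2\,\rho(\lfloor j/2\rfloor) = \rho(j)$ symbols. Two points require care. First, the induction must track the behaviour of every auxiliary non-terminal on \emph{all} unary inputs, not only on those whose length lies in $\cP_2$: the semantics of $e_1/e_2$, $\gand e$ and $\gnot e$ turns on whether subexpressions succeed or fail on the current suffix, so partial information is insufficient. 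Second, one must verify totality — no branch of the recursion loops forever — which should hold because every recursive descent is either preceded by the consumption of at least one symbol or guarded so that it only recurses on strictly shorter suffixes, bounding the recursion depth by the input length (equivalently, one checks the grammar is well-formed). Conclude with $\cL(\cG) = \{a^j : \rec_\cG(S, a^j) = a^j\} = \{a^j : j \ge 1,\ \rho(j)=j\} = \cP_2$.
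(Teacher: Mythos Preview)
Your high-level shape is right and matches the paper: define a helper non-terminal whose consumption on \emph{every} unary suffix is characterised exactly, then finish with $S \leftarrow \mathsf{Helper}\;\gnot{\tt .}$ and an induction on $j$. Where your plan goes wrong is in the choice of invariant, and this is not a detail you can defer.

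You aim for $\rec(A,a^j)=a^{\rho(j)}$ with $\rho(j)$ the largest power of two $\le j$, and you want to realise $\rho(j)=2\,\rho(\lfloor j/2\rfloor)$. But a PEG can only recurse on a \emph{suffix} of the current input: after consuming $k$ symbols of $a^j$ the sub-call sees $a^{j-k}$, never $a^{\lfloor j/2\rfloor}$. Your ``two coupled scans, one at half speed'' sketch does not explain how a lookahead that walks to position $\lfloor j/2\rfloor$ can be turned into a \emph{consumption} of $2\rho(\lfloor j/2\rfloor)$ symbols; lookaheads return to where they started. So the central construction is genuinely missing, and the recursion you are trying to implement is the wrong one for the medium.

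The paper's grammar (the $\ell=2$ case of Theorem~\ref{thm:powers-of-k}) uses a different, suffix-friendly invariant. Its $\mathsf{Helper}$, with rule
\[
\mathsf{Helper}\ \leftarrow\ a\,\mathsf{Helper}\,a\ /\ a\,(\gand\mathsf{Helper})\,a\ /\ a\,(\gnot\mathsf{Helper})\,a,
\]
satisfies, for $j\ge 2$, $\rec(\mathsf{Helper},a^j)=a^{c(j)}$ where $c(2)=2$ and
\[
c(j)=\begin{cases} c(j-1)+2 & \text{if }c(j-1)<j-1,\\ 2 & \text{if }c(j-1)=j-1.\end{cases}
\]
Equivalently $c(j)=2\bigl(j-P(j)\bigr)$, where $P(j)$ is the largest power of two strictly below $j$; hence $c(j)=j$ iff $j$ is a power of two. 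The point is that the recursion is on $a^{j-1}$, which is exactly what the rule $a\,\mathsf{Helper}\,a$ gives you, and the second and third alternatives handle the ``reset'' when $c(j-1)=j-1$. This is not a repackaging of your $\rho$-invariant; it is a different quantity with a one-step (not halving) recurrence, and that is what makes the construction go through. Once you have this grammar, your proposed verification by strong induction on $j$ is exactly the right way to prove it correct.
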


How does this relate to pumping lemmas? The known pumping lemmas are able to produce, given a sufficiently large string $x$ in the language, a strictly larger string $y$, also in the language, which is not much larger --- $|y| \le |x| + O(1)$ is sufficient. But here is a language with a PEG, for which $|y|$ is always at least $2 |x|$. And soon after conjecturing that $c \cdot |x|$ might be sufficient, for some universal constant $c$, one is disabused of that notion by the following generalisation of the above:

\begin{theorem}\label{thm:powers-of-k}
  For every $\ell \in \bbN$, the language
$\cP_\ell = \{ a^{\ell^n} \mid n \ge 0 \}$
is in $\PEG$.
\end{theorem}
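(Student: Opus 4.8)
The plan is to build a PEG for $\cP_\ell = \{a^{\ell^n} \mid n \ge 0\}$ by designing a non-terminal that, when fed a string $a^m$, recognises it precisely when $m$ is a power of $\ell$, using the recognition procedure's ability to "peek" with the $\gand$ operator and to consume prefixes. The key idea is to implement a divisibility/division check: a string of length $m$ is $\ell^n$ for some $n \ge 1$ iff $m$ is divisible by $\ell$ and $m/\ell$ is itself a power of $\ell$. So the heart of the construction is a non-terminal, call it $D_\ell$, which consumes exactly every $\ell$-th symbol of the input it is given --- i.e., on input $a^m$ with $\ell \mid m$ it consumes $a^{m/\ell}$ --- and fails (or behaves detectably differently) when $\ell \nmid m$. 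For instance, for $\ell = 2$ one wants a rule that consumes one $a$ for every two $a$'s present; the general $\ell$ case is the same with $\ell$ copies of the terminal. Then the top-level non-terminal $S$ recognises $a^m$ by repeatedly dividing by $\ell$ and checking that we eventually reach length $1$ (the base case $n = 0$, handling $a^1$), and also accepting $\lambda$ if we wish to include $n=0$ giving $a^1$ only --- one must be careful whether $a^{\ell^0}=a$ is the base case.

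Concretely, I would first write, for the case $\ell = 2$ as a warm-up, a pair of non-terminals that "halve": a non-terminal $A$ with alternatives $A \leftarrow a\,a\,A$ and $A \leftarrow \eps$ that matches a maximal even block, combined with lookahead $\gand$ expressions that let a companion non-terminal consume exactly $a^{\lfloor m/2 \rfloor}$ while verifying $m$ is even. The standard trick (this is essentially Birman's) is to interleave consumption and lookahead: one non-terminal walks down the input consuming one symbol per two symbols remaining, using $\gand(a\,a\,\cdots)$ to certify there are enough symbols left. Then $S$ is defined recursively so that $\rec_\cG(S, a^m) = a^m$ iff after stripping down by repeated division we land exactly on the base case with no remainder ever appearing; any non-divisibility at some stage causes a $\FAIL$ that propagates. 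I would verify correctness by induction on $m$ (equivalently on $n$), checking both that $a^{\ell^n}$ is recognised and that no other unary string is, and checking totality by observing each recursive call is made on a strictly shorter suffix, so no infinite loop occurs (the grammar is well-formed, avoiding left recursion).

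The main obstacle I anticipate is getting the "divide by $\ell$" gadget exactly right: the recognition procedure only lets an expression consume a \emph{prefix} of its input and report how much it consumed, and one needs the division gadget to consume precisely $a^{m/\ell}$ and then hand control back so the outer non-terminal sees the correct remaining suffix and can recurse. Coordinating which symbols are "consumed" versus merely "looked at" with $\gand$, and ensuring that a non-multiple-of-$\ell$ length is reliably detected (rather than, say, silently truncated) rather than producing a spurious acceptance, is the delicate part; this is exactly the kind of subtle bookkeeping that Ford's $a^nb^nc^n$ example got wrong. A clean way to sidestep some of this is to phrase the recursion multiplicatively from the bottom: have a non-terminal that, on input $a^m$, succeeds and consumes all of $a^m$ iff $m \in \{\ell^n : n \ge 0\}$, defined via $\gand$-guarded checks that the input length is $1$, or is $\ell$ times something of the right form --- implemented by marking off one symbol per $\ell$ symbols and recursing on the marked-off copy. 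I would present the $\ell = 2$ case in full, then remark that replacing "two" by "$\ell$" everywhere and adjusting the counting gives the general statement, since the earlier Theorem for $\cP_2$ is literally the $\ell = 2$ instance.
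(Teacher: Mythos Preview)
Your proposal has a genuine gap, and it lies precisely where you suspected: the ``divide by $\ell$'' gadget. In the PEG model an expression can only consume a \emph{prefix} and then hand the \emph{remaining suffix} to whatever follows; there is no way to ``recurse on the marked-off copy'', i.e.\ on the prefix just consumed. So if your $D_\ell$ consumes $a^{m/\ell}$ from $a^m$, the suffix handed to the next expression is $a^{m(\ell-1)/\ell}$, not $a^{m/\ell}$, and your intended recursion $S \to D_\ell\, S$ tests the wrong quantity. You could try to fix this by having $D_\ell$ consume $a^{m(\ell-1)/\ell}$ instead, leaving $a^{m/\ell}$ as the suffix --- but now you must actually build that gadget, and the natural candidates fail. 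For instance the ``consume $\ell-1$, recurse, then one more'' rule $D \leftarrow a^{\ell-1}\, D\, a \;/\; \eps$ (or variants with $\gand(a\,a\cdots)$ guards) does \emph{not} produce the function $m \mapsto m(\ell-1)/\ell$; if you trace it for $\ell=2$ you get consumption lengths $0,0,2,0,2,4,6,0,2,4,6,8,10,12,14,0,\ldots$, a sawtooth that resets at powers of two rather than a halving map. The greedy, suffix-based semantics of PEG recursion makes a clean ``halving'' consumption function very hard to realise, and you have not supplied one.

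The paper's proof exploits exactly this sawtooth phenomenon instead of fighting it. It writes a single non-terminal $\mathsf{Helper}$ with three alternatives,
\[
\mathsf{Helper} \leftarrow a^{\ell-1}\,\mathsf{Helper}\,a \;/\; a^{\ell-1}\,(\gand\mathsf{Helper})\,a \;/\; a\,((\gnot\mathsf{Helper})\,a)^{\ell-1},
\]
and analyses its global consumption behaviour directly: on $a^m$, $\mathsf{Helper}$ consumes a number of symbols that increases by $\ell$ for each extra $\ell-1$ input symbols, \emph{resetting} (via the $\gand\mathsf{Helper}$ alternative) precisely when the previous call consumed everything --- which happens exactly at powers of $\ell$. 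Thus $\mathsf{Helper}$ consumes all of $a^m$ iff $m$ is a power of $\ell$ with $m\ge \ell$, and the top rule just tests $\mathsf{Helper}\,\gnot.$ (with a separate base case for $a^1$). There is no division step and no attempt to recurse on a previously consumed portion; the power-of-$\ell$ structure emerges from the interaction of the three alternatives. Your plan would need a fundamentally different gadget to be salvaged, and you have not provided it.
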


\newcommand{\IAmPowerLLength}{\mathsf{IAmPowerLLength}}
\begin{proof}
  Consider the following parsing expression grammar $\cG$:
    \[
      \IAmPowerLLength \leftarrow a \gnot . \quad/\quad \mathsf{Helper} \; \gnot.
    \]
    \[
      \mathsf{Helper} \leftarrow a^{\ell-1}\; \mathsf{Helper}\; a \quad / \quad  a^{\ell-1} (\gand \mathsf{Helper}) a \quad / \quad  a ((\gnot \mathsf{Helper}) a)^{\ell-1}
    \]
    Let us analyse the behaviour of the recognition procedure $\rec_\cG(\mathsf{Helper}, x)$ for each $x \in \{a\}^\ast$. The shortest $x$ to be accepted will be $a^\ell$; this string is accepted via the third alternative of the $\mathsf{Helper}$ non-terminal, and every symbol will be consumed, so $a^\ell$ is recognised by $\mathsf{Helper}$. Then the second string to be accepted will be $a^{\ell-1} a a^{\ell-1}$, via the second alternative --- the first alternative must have failed because it won't find the last $a$. So the second alternative is triggered, but only the first $\ell$-many $a$ symbols will be consumed, leaving $a^{\ell-1}$ symbols unconsumed (hence the string will be ``accepted'', but it won't be ``recognised''). Then the first alternative will trigger for each new sequence of $\ell-1$ $a$s, each time consuming a new $a$ symbol closer to the end of the input. Hence at this point in total we will have consumed $(\ell-1)\ell$ new symbols, which together with the $\ell$ symbols give us $\ell^2$ consumed symbols, and at this point the non-terminal $\mathsf{Helper}$ will have consumed the entire input. Thus $a^{\ell^2}$ is accepted by $\mathsf{Helper}$. Then again the second alternative is triggered, and then the first, until $\ell^3$ symbols are consumed.

    In the end, we conclude that $\mathsf{Helper}$ accepts any string of the form
    \[
      a^{s(\ell-1)} a^s \; z,
    \]
    where the first position of the $a^s$-part is the first position at a power-of-$\ell$ distance from the end of the input, and in this case it consumes the first $s \ell$-many $a$ symbols.
\end{proof}

\subsection{PEG for Sometimes-Palindromes}

One may get a sense for the limitations of parsing expression grammars when trying to produce a PEG for recognising palindromes. One quickly comes to the conjecture that PEGs cannot find the middle bit of the input. In the case of palindromes, we make the following conjecture:

\begin{conjecture}
  The language of even-length palindromes has no PEG, i.e.
  \[
   \mathsf{P} = \{ w w^r \mid w \in \ZO^\ast \} \notin \PEG.
  \]
\end{conjecture}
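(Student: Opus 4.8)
The plan is to route everything through the scaffolding-automaton characterisation of Section~\ref{sec:scaffolding-automata}: it suffices to show that no terminating scaffolding automaton recognises $\mathsf{P}$. Working with the automaton rather than with parsing expressions directly is essential, since the grammar's backtracking semantics make a direct combinatorial argument on $\rec_\cG$ unwieldy, whereas the automaton processes the input in a controlled left-to-right fashion. Given a hypothetical scaffolding automaton $\mathcal{A}$ for $\mathsf{P}$, I would attempt a crossing-sequence / information-transfer argument around the midpoint. Fix $n$, and for each $w \in \ZO^n$ consider $w w^r \in \mathsf{P}$. The idea is to define, for each such run, a notion of \emph{cut-state} of $\mathcal{A}$ at the boundary between position $n$ and position $n+1$ — the bounded amount of information the computation on the left half passes to the computation on the right half — and then run a fooling-set argument: if two distinct $w, w'$ induce the same cut-state, $\mathcal{A}$ would be forced to accept the non-palindrome $w (w')^r$, which is a contradiction once $2^n$ exceeds the number of possible cut-states.

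The key technical steps, in order, would be: (i) isolate from the definition of a scaffolding automaton a combinatorial object playing the role of a crossing sequence — roughly, the multiset of scaffold configurations active at the midpoint boundary together with the order in which they are visited; (ii) prove a cut-and-paste lemma, i.e.\ that the left-half behaviour on $w$ and the right-half behaviour on $v^r$ compose into a genuine accepting run on $w v^r$ whenever their cut-states agree; (iii) bound the number of distinct cut-states sub-exponentially in $n$, which would complete the argument. Steps (i) and (ii) are of the standard ``locality of computation'' flavour, and I expect them to go through with care about how the automaton may re-enter the midpoint region.

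The hard part — and the reason this is stated only as a conjecture — will be step (iii). A scaffolding automaton, like a multi-tape Turing machine, is not obviously forced to route all of its information about the first half through a narrow channel at the midpoint: it can revisit earlier portions of the input, and, as the universality result of Section~\ref{sec:applications} shows, with enough $\wait$-padding a PEG can carry out arbitrary polynomial-time computation, so a priori a ``cut-state'' might carry a polynomial, rather than constant or logarithmic, number of bits. Any argument that bypasses this obstacle must exploit some feature of $\mathsf{P}$ that is genuinely absent from the $\PTIME$-complete languages (which \emph{do} have PEGs), and in that sense proving the conjecture looks at least as hard as separating $\PEG$ from linear-time multi-tape Turing machines — a separation which, as discussed in the introduction, is beyond current techniques. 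A realistic intermediate goal, therefore, is not a full proof but a reduction: showing that a PEG for $\mathsf{P}$ would yield a PEG for Greibach's hardest context-free language $\cH$, or an unexpectedly efficient online simulation, thereby pinning the conjecture to a recognised open problem rather than resolving it outright.
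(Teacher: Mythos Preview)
The statement is a \emph{conjecture}: the paper offers no proof and explicitly lists it among the open problems, so there is nothing to compare your argument against. What you have written is a research plan together with an honest assessment of why the plan stalls, and in that respect your final paragraph agrees with the paper's own position.

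Two of your technical framings, however, do not match the model and should be corrected. First, scaffolding automata are strictly one-pass on the input: each computation step consumes exactly one fresh symbol and never returns. There is no ``re-entering the midpoint region'' and no crossing sequence in the two-way-machine sense; your step~(i) as written (``the multiset of scaffold configurations active at the midpoint boundary together with the order in which they are visited'') simply has no referent here. The only candidate cut-state after reading $w$ is the pair (finite-control state, scaffold built so far), and with that choice your cut-and-paste lemma (step~(ii)) holds \emph{trivially}, since the automaton is deterministic and the scaffold is its entire memory. But the scaffold can faithfully store all of $w$ --- copy each input bit into the label and keep an edge to the previous node --- so there are already $2^n$ distinct cut-states, and step~(iii) fails immediately unless one first invents a much coarser equivalence on scaffolds that still supports cut-and-paste. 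In short, steps~(i)--(ii) are not ``of the standard locality flavour''; they are vacuous, and step~(iii) is the entire problem rather than merely its hard part.

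Second, your proposed intermediate reduction points the wrong way. Greibach's $\cH$ is the \emph{hardest} context-free language: every CFL, including $\mathsf{P}$, already reduces to $\cH$, so a PEG for $\cH$ would give a PEG for $\mathsf{P}$, not the other way around. Showing that a PEG for $\mathsf{P}$ yields a PEG for $\cH$ would amount to proving that palindromes are CFL-complete under PEG-preserving reductions --- itself a substantial and unmotivated claim. If you want to pin the conjecture to a recognised barrier, the relevant connection is the one the introduction already sketches: tie a hypothetical PEG for $\mathsf{P}$ to a linear-time multitape Turing-machine algorithm for a problem believed to require super-linear time.
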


\noindent
However, the above PEG for $\cP_2$ \emph{is} able to find the middle bit of every string whose length is a power of two. This allows us to prove the following result:
\begin{theorem}\label{thm:sometimes-palindromes}
  The language of palindromes of power-of-two length has a PEG:
  \[
    \mathsf{SP} = \{ w w^r \mid w \in \ZO^{2^n}, n \ge 0 \} \in \PEG.
  \]
\end{theorem}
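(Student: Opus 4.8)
The plan is to combine the power-of-two-length detection already embedded in the PEG for $\cP_2$ with a symbol-by-symbol palindrome check that is anchored at the discovered middle position. The key insight from the proof of Theorem~\ref{thm:powers-of-k} (with $\ell = 2$) is that the $\mathsf{Helper}$-style non-terminal accepts a string of the form $a^s a^s z$ precisely when the first position of the second $a^s$-block is at a power-of-two distance from the end of the input, consuming exactly the first $2s$ symbols. Specialised to $\ell=2$ and to a binary alphabet (treating $.$ as ``any character''), a suitably modified $\mathsf{Helper}$ can be made to consume the first half of any input whose length is a power of two, and to reach --- without consuming --- the exact middle position. I would first write down this ``middle-finder'' non-terminal $\mathsf{Half}$, verify by the same inductive analysis as in Theorem~\ref{thm:powers-of-k} that on input $w$ with $|w| = 2^n$ it consumes exactly the first $2^{n-1}$ symbols (and for $n = 0$ consumes nothing, or we handle length~$1$ as a base case), and that on inputs of non-power-of-two length it fails (or at least does not accidentally accept in a way that corrupts the construction).

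Next I would build a non-terminal $\mathsf{MatchRev}$ that, positioned at the start of $w w^r$ with a ``lookahead copy'' of the second half available, checks bit-by-bit that the first half is the reverse of the second half. The natural way to do this within PEGs is the same trick used in known PEG constructions for languages like $a^n b^n c^n$: use a recursive rule of the form $\mathsf{MatchRev} \leftarrow b\, \mathsf{MatchRev}\, (\gand b) \;/\; \cdots$ for each terminal $b \in \ZO$, so that the $i$-th symbol from the front is required to equal the $i$-th symbol from some reference point at the back, with the recursion bottoming out at the middle. Concretely, I expect the grammar to look roughly like
\[
  \mathsf{SP} \leftarrow \gand(\mathsf{Half}\,\mathsf{Half}\,\gnot.)\;\; \mathsf{Match}\;\; \gnot.,
\]
where $\mathsf{Half}\,\mathsf{Half}\,\gnot.$ inside a lookahead both certifies that $|w| = 2^n$ (the length is a power of two, and in fact the input has the form $uu$ with $|u|=2^{n-1}$ only in the length sense, not the content sense) and pins down the middle, and then $\mathsf{Match}$ does the actual mirror comparison. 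The delicate point is wiring the lookaheads so that $\mathsf{Match}$ compares position $i$ from the front against position $|w|-1-i$ from the back; this is done by a recursion that, on a non-empty remaining prefix, peels off the leading bit $b$, recursively matches the shorter string, and then asserts $\gand b$ holds at the position that the recursion has left us at --- exactly the counting-by-consumption mechanism illustrated by $\mathsf{Helper}$.

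The main obstacle I anticipate is bookkeeping the interaction between the two mechanisms: $\mathsf{Half}$ consumes symbols as a side effect of counting, and the palindrome check also needs to traverse the string, so I must be careful that the lookaheads ($\gand$, $\gnot$) are used to make the length/position computations non-destructive while a single ``real'' pass consumes the input exactly once and ends at the end of the string. A secondary subtlety is the behaviour on inputs whose length is not a power of two, and on inputs of power-of-two length that are \emph{not} palindromes: I need $\mathsf{SP}$ to fail on these, which means checking that $\mathsf{Half}$ and $\mathsf{Match}$ both fail gracefully (no infinite loop, no spurious acceptance) --- totality of the whole grammar should follow from the fact that every recursive call strictly decreases the remaining input length, exactly as for the $\cP_2$ grammar. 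Once the grammar is written and these invariants are stated precisely, the correctness proof is a routine induction on $n$ mirroring the analysis already carried out for Theorem~\ref{thm:powers-of-k}, so I would present the grammar, state the key invariant for $\mathsf{Half}$ and for $\mathsf{Match}$, and then verify the invariants by induction.
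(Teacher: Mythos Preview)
Your high-level plan is the right one and coincides with the paper's: combine the $\cP_2$ power-of-two detector with a palindrome-style recursion $b\,\mathsf{P}\,b$. But the proposal has a real gap at exactly the point you flag as ``delicate'': you never say how $\mathsf{Match}$ knows where the middle is. Running $\gand(\mathsf{Half}\,\mathsf{Half}\,\gnot.)$ as a one-shot lookahead certifies the total length is a power of two, but it leaves no residue that $\mathsf{Match}$ can consult; after the lookahead you are back at position $0$ with no way to transmit ``the middle is at position $2^{n-1}$'' to the subsequent recursion. Your rule sketch $\mathsf{MatchRev}\leftarrow b\,\mathsf{MatchRev}\,(\gand b)$ does not help either: the $\gand b$ does not consume, so every recursive level would test the \emph{same} position, and there is still no base case tied to the middle. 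Relatedly, your description of $\mathsf{Half}$ is off: the $\cP_2$ $\mathsf{Helper}$, on an input of length exactly $2^n$, consumes \emph{all} of it (that is precisely why $\mathsf{Helper}\,\gnot.$ recognises $\cP_2$), not the first half.

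The missing idea, which the paper uses, is to call the power-of-two detector \emph{at every step of the palindrome recursion} rather than once up front. Concretely, one writes
\[
\mathsf{P}\ \leftarrow\ b\ \gnot(\IAmPowerTwoLength)\ \mathsf{P}\ b\quad/\quad b\ \gand(\IAmPowerTwoLength)\ b
\]
(with one alternative per $b\in\ZO$). After the outer $\gand(\IAmPowerTwoLength)$ has certified the total length is $2^n$, the recursion peels one bit, then asks ``is my current distance to the end a power of two?''; the first time this succeeds is exactly at distance $2^{n-1}$, i.e.\ at the middle, where the base case $b\,\gand(\IAmPowerTwoLength)\,b$ fires. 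Thus the detector is used as a \emph{positional predicate} interleaved with the matching, not as a separate half-consumer. Once you wire it this way, your planned inductive verification goes through verbatim; what you wrote is the right proof skeleton around the wrong grammar.
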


\newcommand{\IAmPowerTwoLength}{\mathsf{IAmPowerTwoLength}}
\begin{proof}
  The following parsing expression grammar will do:
  \[
    \mathsf{S} \leftarrow \gand( \IAmPowerTwoLength ) \; \mathsf{Palindrome}
  \]
  \[
    \mathsf{Palindrome} \leftarrow \mathsf{P}\gnot. \;/\; 0 0 \gnot . \;/\; 1 1 \gnot .
  \]
  \begin{align*}
    \mathsf{P} \leftarrow \; & \; 0 \; \gnot(\IAmPowerTwoLength)\; \mathsf{P} \;0\\
    / & \; 1 \; \gnot(\IAmPowerTwoLength) \; \mathsf{P} \; 1\\
    / & \; 1 \; \gand(\IAmPowerTwoLength) \;1\\
    / & \; 0 \; \gand(\IAmPowerTwoLength) \; 0
  \end{align*}
  \[
    \IAmPowerTwoLength \leftarrow \mathsf{Helper} \; !.
  \]
  \[
    \mathsf{Helper} \leftarrow \mathsf{Bit}\; \mathsf{Helper}\; \mathsf{Bit} \quad / \quad  \mathsf{Bit} \; \mathsf{Bit}
  \]
  \[
    \mathsf{Bit} \leftarrow 0 / 1
  \]
  As in the proof of Theorem \ref{thm:powers-of-k}, the non-terminal $\IAmPowerTwoLength$ accepts exactly at the positions whose distance from the end-of-input is a positive power of two, and consumes the entire input in that case. Hence the expression $(\gand \IAmPowerTwoLength)$ accepts exactly at positions whose distance from end-of-input is a positive power of two, and when it accepts it will not consume any input. On the other hand the expression $(\gnot \IAmPowerTwoLength)$ accepts exactly at positions which are \emph{not} at positive-power-of-two distance away from the end-of-input.

  The recognition procedure associated with the non-terminal $\mathsf{P}$ now behaves as follows: one of the first two alternatives will be chosen repeatedly, until the first position which is a positive power-of-two is reached; then, at that position, one of the last two alternatives is chosen. (In each case, which of the two alternatives gets chosen is determined by the next bit.) It follows that $\mathsf{P}$ accepts exactly at those positions $i$ such that the input after (and including) position $i$ is of the form:
  \[
    x\, y \, z
  \]
  where $x = y^r$, and the leftmost position after $i$ which is at a positive-power-of-two distance away from the end-of-input, is the first bit of $y$. And when $\mathsf{P}$ accepts such a string $x y z$, $\mathsf{P}$ consumes exactly the prefix $x y$.

  Inspection of the rules for $\mathsf{Palindrome}$ and $\mathsf{S}$ concludes the proof.
\end{proof}

\subsection{PEG for a Counting Language}

The next example will be crucial in Sections \ref{sec:universality} and \ref{sec:peg-vs-online}, for reasons which we will explain in Section \ref{sec:expressive-power}.

\begin{theorem}\label{thm:counting}
  The following \emph{reversed counting language}, over the alphabet $\{0,1,\#,\circ\}$, has a parsing expression grammar:
  \[
    \{ (n)_2^r \circ (n)_{2} \# \; (n-1)_2^r \circ (n-1)_2 \#\; \cdots \;
    \# \; (0)_2^r \circ (0)_2 \#   \mid n \ge 0\}.
  \]
\end{theorem}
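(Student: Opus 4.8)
The plan is to build a PEG that, like the $\IAmPowerTwoLength$ and $\mathsf{Helper}$ constructions of the previous proofs, exploits the fact that PEG recognition can iterate a ``self-referential'' non-terminal to discover distinguished positions measured from the \emph{end} of the input — here, the positions marking the left endpoint of each successive block $(k)_2^r \circ (k)_2 \#$. The key structural observation is that the block for $k$ can be recognized from its two halves: the part before $\circ$ is $(k)_2^r$, the part after $\circ$ is $(k)_2$, so each is determined by the other via reversal, exactly as in the palindrome construction of Theorem~\ref{thm:sometimes-palindromes}. Moreover, consecutive counter values $k$ and $k-1$ are related by a fixed ``binary decrement'' operation, which is a local rewriting (flip the trailing run of $1$'s to $0$'s and the first $0$ to a $1$, reading from the low-order end — and since the representations are stored reversed, this is a bounded-lookahead transformation on a prefix).

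\medskip\noindent
First I would write the ``skeleton'' non-terminals that parse a single well-formed block $(w)^r \circ w \#$ for an arbitrary binary string $w$, together with a non-terminal $\mathsf{Dec}$ that, positioned at the start of block $k$, verifies that the \emph{next} block encodes $k-1$; this is done by a finite set of alternatives performing the trailing-$1$'s-to-$0$'s-and-first-$0$-to-$1$ bookkeeping, with the usual PEG trick of pairing each consumed symbol on the left half against its mirror on the right half using $\gand$/$\gnot$ lookahead to locate the $\circ$ and $\#$ separators. Second, I would introduce a ``counting spine'' non-terminal — in spirit the analogue of $\mathsf{Helper}$ — whose job is to recognize the chain of blocks $(k)_2^r\circ(k)_2\#\,(k-1)_2^r\circ\cdots\#\,(0)_2^r\circ(0)_2\#$, terminating correctly exactly when the last block is the one for $0$ (i.e.\ the block $\circ\#$, or $0^? \circ 0^? \#$ depending on the chosen normalization of $(0)_2$) and the input ends there (checked by $\gnot.$). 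Third, the start non-terminal $\mathsf{S}$ would fire this spine at the beginning of the input and demand it recognize everything; crucially, I must also guarantee that $n$ is determined — here, unlike in the palindrome case, the \emph{lengths} of the blocks are not powers of two but grow like $\log k$, so the spine does not need an external length oracle: the requirement that successive blocks differ by exactly one via $\mathsf{Dec}$, plus the boundary condition at $0$, pins down the entire string once its first block is read, and reading left-to-right the first block's content is forced by matching its two halves around the first $\circ$.

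\medskip\noindent
The main obstacle I anticipate is the same one that makes the earlier proofs delicate: ensuring the recognition procedure is \emph{total} (no infinite loop) and that the alternatives are tried in an order that makes the intended parse the one actually found. Concretely, the $\mathsf{Dec}$ and spine non-terminals must be arranged so that every recursive call is on a \emph{strictly shorter} suffix — the decrement step must consume at least one block before recursing — which forces a careful placement of the $\gand$/$\gnot$ lookaheads so that they only \emph{test} positions without the recursion ever re-entering at the same spot. A secondary subtlety is handling the degenerate short cases: the representations of $0$ and $1$, whether the reversed encoding of $0$ is the empty string or the single symbol $0$, and whether $(k)_2$ may have a leading (hence, reversed, a trailing) $0$ — the grammar must reject any block whose binary string is not in the canonical ``shortest representation'' form, which adds a few more finite alternatives checking the high-order bit. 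Once these are pinned down, correctness follows by the same style of argument as in Theorems~\ref{thm:powers-of-k} and~\ref{thm:sometimes-palindromes}: describe exactly the set of suffixes accepted by each non-terminal and what it consumes, then observe that $\rec_\cG(\mathsf{S},x)=x$ holds iff $x$ is in the reversed counting language.
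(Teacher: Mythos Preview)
Your high-level decomposition is exactly the paper's: recognise each block as a palindrome-shaped $w^r\circ w\#$ (the paper calls this $\mathsf{Inverted}$/$\mathsf{InvertedBlock}$), chain the blocks with a spine that terminates at the block for $0$ (the paper's $\mathsf{Sequence}$), and between each pair of consecutive blocks verify the off-by-one relation (the paper's $\mathsf{AddOneBlock}$, playing the role of your $\mathsf{Dec}$). Your remarks on totality and on the degenerate base cases are also in line with what the construction needs.

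The one place where your sketch has a genuine gap is the mechanism inside $\mathsf{Dec}$. Calling the decrement ``a bounded-lookahead transformation on a prefix'' glosses over the hard part: the borrow can run through arbitrarily many bits, and --- more critically --- you have not said how, standing at some bit of block $k$, you reach the \emph{corresponding} bit of block $k{-}1$, which sits an unbounded distance away. A palindrome-style mirror match around $\circ$ only relates the two halves of the \emph{same} block; it does not by itself align bits across the $\#$. The paper's resolution is to compare the \emph{second} half of block $k$, namely $(k)_2$, against the \emph{first} half of block $k{-}1$, namely $(k{-}1)_2^r$: these two strings are adjacent across a single $\#$ and carry the same bit-significance in the same left-to-right order, so a helper
\[
\mathsf{SameLength}\leftarrow \mathsf{Bit}\ \mathsf{SameLength}\ \mathsf{Bit}\ /\ \#
\]
lets a non-terminal positioned at bit $i$ of $(k)_2$ land exactly on bit $i$ of $(k{-}1)_2^r$. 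The unbounded carry is then handled not by lookahead but by a separate \emph{recursive} non-terminal $\mathsf{Carry}$ that walks rightward toward the $\#$, at each step using the same $\mathsf{SameLength}$ alignment to test whether the next lower-order bit of $k{-}1$ is a $1$. With these two ingredients in hand the per-bit check ($\mathsf{AddOneDigit}$ in the paper) really is the finite case split you describe, and the rest of your outline goes through.
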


\bsni
The characters $\#$ and $\circ$ are part of the input alphabet, and are being used as separators, with no other special meaning. We will call $\#$ the \emph{outer separator}, and $\circ$ the \emph{inner separator}.

\begin{proof}
  The proof relies on the intuition built in the previous two proofs. Roughly speaking, it implements the simple increment-by-one algorithm.

  Let us begin by presenting only part of the grammar. We will omit the rules associated with the non-terminal $\mathsf{AddOneBlock}$, for now. The grammar begins with the rules:
  \[
    \mathsf{Sequence} \leftarrow \gand (\mathsf{AddOneBlock})\; \mathsf{InvertedBlock} \; \mathsf{Sequence} \quad / \quad 0\circ 0\#
  \]
  \[
    \mathsf{InvertedBlock} \leftarrow \mathsf{Inverted} \#
  \]
  \[
    \mathsf{Inverted} \leftarrow 1\; \mathsf{Inverted} \; 1 \quad/\quad 0\; \mathsf{Inverted}\; 0 \quad/\quad \circ
  \]
  The first thing to notice is that $\mathsf{InvertedBlock}$ recognises exactly ``inverted blocks'' of the form $w^r \circ w \#$, where $w \in \ZO^\ast$. Thus the inputs recognised by $\mathsf{Sequence}$ are exactly sequences of inverted blocks which additionally are accepted by the $\mathsf{AddOneBlock}$ non-terminal; the rules for this non-terminal are:
  \[
    \mathsf{AddOneBlock} \leftarrow \mathsf{Bit}^+ \circ \mathsf{AddOneCheck}
  \]
  \[
    \mathsf{AddOneCheck} \leftarrow \mathsf{AddOneDigit}\; \mathsf{AddOneCheck} \quad / \quad \#
  \]
  Now $\mathsf{AddOneBlock}$ accepts strings of the form $x \circ y \#$, such that $x \in \ZO^\ast$, and such that $\mathsf{AddOneDigit}$ accepts the input at every position of $y$. This will be defined in such a way that, at the $i$-th bit of $y$ (starting from the right), $\mathsf{AddOneDigit}$ will accept if and only if the $i$-th bit of $(n+1)_2$ is $y_i$, where $n$ is the number encoded in the following block (i.e.~after the $\#$).

  To enforce this behaviour, we use the following rules:
  \begin{align*}
    \mathsf{AddOneDigit} \leftarrow\; & \; \gand \mathsf{NextIs1} \; \gand \mathsf{Carry} \; 0\\
    /&\; \gand \mathsf{NextIs0} \; \gand \mathsf{Carry} \; 1\\
    /&\; \gand \mathsf{NextIs1} \; \gnot \mathsf{Carry} \; 1\\
    /&\; \gand \mathsf{NextIs0} \; \gnot \mathsf{Carry} \; 0\\
    /&\; \gand \mathsf{NextIsCircle} \; \gand \mathsf{Carry} \; 1
  \end{align*}
  \[
    \mathsf{Carry} \leftarrow . \; \gand \mathsf{NextIs1} \; \gand \mathsf{Carry} \quad /\quad \mathsf{Bit} \; \#
  \]
  The non-terminals $\mathsf{NextIs0}$, $\mathsf{NextIs1}$, and $\mathsf{NextIsCircle}$ will verify that the input symbol in the corresponding position in the next block is a $0$, a $1$ or a $\circ$, respectively. So, for example, if the input after the current position is
  \[
    y_i y_{i+1} \cdots y_k \# x_k \cdots x_{i-1} x_i,
  \]
  then $\mathsf{NextIs0}$ will accept iff $x_i = 0$, $\mathsf{NextIs1}$ will accept iff $x_i = 1$, and $\mathsf{NextIsCircle}$ will accept iff $x_i = \circ$.
   
  It results from this that the non-terminal $\mathsf{Carry}$ accepts if and only if there is a carry at the current position, when we add $1$ to the number after the $\#$ separator: we implement the incremented $1$ by setting the carry to $1$ at the least significant bit, and then the carry propagates as long as the number after the separator has a $1$. Then $\mathsf{AddOneDigit}$ successfully checks a single digit in the increment, in the usual way: a $1$ and a carry sum to $0$, a $0$ and a carry sum to $1$, \emph{etcetera}.

  All we are left to do is defining the auxiliary non-terminals:
  \[
    \mathsf{NextIs0} \leftarrow \mathsf{Bit} \; \mathsf{SameLength} \; 0
  \]
  \[
    \mathsf{NextIs1} \leftarrow \mathsf{Bit} \; \mathsf{SameLength} \; 1
  \]
  \[
    \mathsf{NextIsCircle} \leftarrow \mathsf{Bit} \; \mathsf{SameLength} \; \circ
  \]
  \[
    \mathsf{SameLength} \leftarrow \mathsf{Bit}\; \mathsf{SameLength} \;\mathsf{Bit} \quad /\quad \#
  \]
  \[
    \mathsf{Bit}^+ \leftarrow \mathsf{Bit} \; \mathsf{Bit}^+ \;/\; \mathsf{Bit}
  \]
  \[
    \mathsf{Bit} \leftarrow 0 \;/\; 1 \qedhere
  \]
\end{proof}

Let us here make an important remark. The simple increment-by-one algorithm works by scanning the bits from right to left. However it does not appear to be possible to implement such a right-to-left scanning using PEGs, but left-to-right scanning can be done, and this is what the $\mathsf{NextIs}\ast$ non-terminals are doing, and checking inversion is possible, as shown by the $\mathsf{Inverted}$ non-terminal. So we may implement right-to-left scanning by inverting at each block and then using left-to-right scanning. This trick will be called ``reverse and scan'', and will be used in our simulation of Turing machines by PEGs (in Section \ref{sec:universality}), as well as in our construction of a non-real-time $\PEG$ language (in Section \ref{sec:peg-vs-online}).

\paragraph{Conclusion}
While carefully considering the examples above, one will get a sense that the computational power of PEGs is much greater than it seems at first glance. When considering why and how these examples work, one is slowly drawn to a generalisation of the above: a computational model for languages recognised by parsing expression grammars. This is what we present in the next section.

\section{Scaffolding Automata}\label{sec:scaffolding-automata}

Let us begin by giving an informal description of a scaffolding automaton. Such an automaton is a computing machine which constructs a labelled, directed, acyclic graph of bounded degree, which we call a \emph{scaffold}. At the start of the computation, the graph is a single node with a special end-marker label; this is the \emph{base} of the scaffold. Then as the computation proceeds new input symbols are read and new nodes are added; the node which was last added is called the \emph{top} of the scaffold. At each step of computation, the scaffolding automaton sees a new input symbol, and is allowed to look at a finite-distance neighbourhood of the top; based on the edges which are present, on the labels it sees, on the input symbol it just read, and on the current state of its finite control, the automaton adds a new node to the scaffold (the new top), and chooses the edges of this new node to point to some nodes in the finite-distance neighbourhood it has just observed. This is repeated until all input symbols are read.

\subsection{Formal Definition}

\begin{definition}[Scaffold]\label{def:scaffold}\label{def:path}
  Let $d \ge 1$, $t \ge 0$ be natural numbers, and let $\Gamma$ be an alphabet. An \emph{edge list} of degree $d$ is a tuple
  \[
    e=(e(0),\ldots,e(d-1))\in (\bbN \cup \{ \varnothing \})^d.
  \]
  A \emph{$(d, \Gamma)$-scaffold} of size $t + 1\in \mathbb{N}$ is a labelled multidigraph $S=(V,E,L)$ with set of nodes $V = [t]$, a set of edge lists $E=\{\;e_v\in (\bbN \cup \{\varnothing\})^d\mid v \in [t]\;\})$, where
  \begin{equation*}
    \forall v\in [t]\; \forall i \in [d)\;\quad e_v(i)\in [v] \cup \{\varnothing\},\tag*{(``edges point backwards'')}
  \end{equation*}
  and a labelling function $L: V \to \Gamma \cup \{\varnothing\}$.

  We call $t$ the \emph{top} of the scaffold $S$.  If $e_v(i)=\varnothing$, one says that that \emph{node $v$ is missing edge $i$}, otherwise we say that \emph{edge $i$ is present at node $v$}. If $L(v) = \varnothing$, one says \emph{$v$ is unlabelled}.  Let $\bbS(d, \Gamma)$ be set of all $(d,\Gamma)$-scaffolds (of any length).

  Given a tuple $p \in [d)^k$, and a node $v \in V$ in a $(d,\Gamma)$-scaffold $S = (V,E,L)$, we may inductively define the sequence
  \[
    v_0 = v\text{ and } v_{j+1} =
    \begin{cases}
      e_{v_j}(p_j) & \text{if } v_j \in V,\\
      \varnothing & \text{if } v_j = \varnothing.
    \end{cases}
  \]
  If this sequence has $v_i = \varnothing$ for some $i \in [k]$, we say $p$ is an \emph{invalid path from $v$ in $S$}. Otherwise we say $p$ is a (valid) \emph{path from $v$ to $v_k$ in $S$}.
\end{definition}

\begin{definition}[Neighbourhood]
  Given $S=(V,E, L) \in \bbS(d,\Gamma)$, $k \ge 0$ and $v\in V$, the \emph{$k$-neighbourhood of $v$ in $S$}, $N_k(S, v)$, is given inductively by $N_0(S, v)=L(v)$ and $N_{k+1}(S, v) = (L(v), N_k(S, e_v(0)), \ldots, N_k(S, e_v(d-1)))$, where we set $N_k(S, \varnothing) = \varnothing$.
  
The set of \emph{$k$-neighbourhoods for $(d,\Gamma)$-scaffolds}, $\cN_k(d,\Gamma)$, is the set of partial, $d$-ary, $\Gamma$-labelled trees. It may be inductively defined by letting $\cN_0(d, \Gamma) = \Gamma \cup \{\varnothing\}$ and $\cN_{k+1}(d,\Gamma) = (\Gamma \cup \{\varnothing\})\times(\cN_{k}(d, \Gamma) \cup \{\varnothing \})^d$.

\end{definition}

   \begin{definition}[Scaffolding automaton]
  	 A \emph{scaffolding automaton} $\cA$ is a tuple $\cA = \langle \Sigma, d, \Gamma, k, Q, \delta, q_0, F\rangle$, where,
  \begin{itemize}
  \item $\Sigma$ is an alphabet, called the \emph{input alphabet},
  \item $d \ge 1, k \ge 0$ are natural numbers, called \emph{degree} and \emph{distance}, respectively,
  \item $\Gamma$ is an alphabet, called the \emph{working alphabet},
  \item $Q$ is a finite set of \emph{states},
  \item $q_0 \in Q$ is the \emph{initial state},
    \item $F \subseteq Q$ gives the \emph{accepting states}, and 
    \item the \emph{transition function} is of type
      \[
        \delta:Q\times\Sigma\times \cN_k(d, \Gamma) \to Q \times \Gamma \times ([d)^{\le k}\cup \{\mathsf{SELF}, \varnothing \})^d.
      \]
    \end{itemize}
  \end{definition}

  \noindent
  A scaffolding automaton builds a scaffold while reading the input. The initial scaffold is $S_0 = (\{0\}, \{\}, L)$ where $L(0) = \varnothing$. The transition function $\delta$ transforms a scaffold as follows.

\begin{definition}[Single step of computation] Let $S = ([t], E, L) \in \bbS(d,\Gamma)$, and $\delta$
  be a transition function. For some $q\in Q$ and $\sigma\in \Sigma$, let
  \[
    (q', \gamma, p_0, \ldots, p_{d-1}) = \delta(q,\sigma,N_k(S,t)).
  \]
  The \emph{single-step function} is then given by $\Step_{\delta,\sigma}(q, S) = (q',S')$, where $S' = ([t+1], E', L') \in \bbS(d,\Gamma)$, with $L'(t+1) = \gamma$, $L'(v) = L(v)$ for $v \in [t]$, and $E' = E \cup \{ e_{t+1} \}$, for the edge list $e_{t+1} = (v_0, \ldots, v_{d-1})$, where $v_i$ is obtained by following path $p_i$ from $t$ in $S$ (and equals $\varnothing$ if $p_i$ is an invalid path from $t$ in $S$); if $p_i = \varnothing$, then $e_{t+1}(i) = \varnothing$ also, and if $p_i = \mathsf{SELF}$, then $e_{t+1}(i) = t+1$.
\end{definition}

\noindent
We now formally define how the computation proceeds.
\begin{definition}
  Let $\cA = \langle \Sigma, d, \Gamma, k, Q, \delta, q_0, F\rangle$ be a scaffolding automaton, and $x = \sigma_1 \cdots \sigma_n \in \Sigma^n$. Then the \emph{computation of $\cA$ on $x$}, denoted $\cA(x)$, is a sequence
  \[
    \cA(x) = ((q_0,S_0),(q_1,S_1), \ldots, (q_n, S_n)) \in (Q\times \bbS(d, \Gamma))^{1+n}.
  \]
  Having defined $(q_i,S_i)$ up to some $i < n$ --- notice that $q_0$ is the initial state and $S_0$ is the initial scaffold --- we let $(q_{i+1},S_{i+1}) = \Step_{\delta, \sigma_{i+1}}(q_i,S_i)$.
\end{definition}

\begin{definition}  Let $\cA = \langle \Sigma, d, \Gamma, k, Q, \delta, q_0, F\rangle$ be a scaffolding automaton, and $x = \sigma_1 \cdots \sigma_n \in \Sigma^n$. Let $\cA(x) = ((q_0,S_0),(q_1,S_1), \ldots, (q_n, S_n))$ be the computation of $\cA$ on $x$.
  We say that $\cA(x)$ is \emph{accepting} if $q_{n} \in F$; otherwise we say it is \emph{rejecting}.
  This defines the \emph{language decided by $\cA$}:
  \[
    \cL(\cA) = \{ x \in \Sigma^\ast \mid \cA(x) \text{ is accepting} \}.
  \]
\end{definition}

\subsection{Illustrative Examples, Revisited}\label{sec:expressive-power}

We will soon prove that a language has a parsing expression grammar if and only if its reverse is decided by a scaffolding automaton ---  this is Theorem \ref{thm:peg-automata} of Section \ref{sec:equivalence-PEGs}. However, in order to become more familiar with the model, let us begin by directly constructing scaffolding automata for the reverse of the languages seen in Section \ref{sec:example-PEGs}.

For each $\ell \in \bbN$, the power-length language $\cP_\ell^r = \cP_\ell = \{ a^{\ell^n} \mid n \ge 0 \}$ is its own reversal, so let us construct a scaffolding automaton $\cA_\ell$ which decides $\cP_\ell$. Informally, an automaton for $\cP_\ell$ behaves as follows. The automaton makes sure that every node in the scaffold has an edge to the previous node. It first accepts after reading the first $a$, and then after reading the first $\ell$-many $a$'s --- so it accepts $a$ and $a^\ell$. From this point onward a second edge will be maintained that goes backward in the scaffold; we call this edge the \emph{backtracking edge}; the idea is that for each $\ell - 1$ new symbols read, the backtracking edge in the new top node will be moved a single position backwards (towards the base of the scaffold); once the backtracking edge reaches the base, the automaton enters an accepting state and again points the backtracking edge to the new top. This way, the next accepted string will have $\ell$-times as many symbols as the previous accepted string.\footnote{Because $\ell^{k} = \ell^{k-1} + \ell^{k-1}(\ell - 1)$.
}

Let us translate this informal description to the formal definitions given in the previous section. This will be the only scaffolding automaton for which we will do such a translation.

The scaffolding automaton for $\cP_\ell$ is given by $\cA_\ell = \langle \Sigma = \{a \}, d = 2, \Gamma = \{ \boxtimes, \Box \}, k = 2, Q, \delta, q_0, F = \{q_1, q_\ell, q''_{\ell-1}\}\rangle$, where $Q = \{q_0, q_1, \ldots, q_\ell,$ $q'_1, \ldots, q'_{\ell-1},$ $q''_1, \ldots, q''_{\ell - 1}\}$. The degree $d$ equals $2$, and at each node in the scaffold edge $0$ will always point to the previous node, and edge $1$ will be the backtracking edge. We will use wildcards when describing elements of $\cN_k(\Gamma, d)$, so for example $\ast$ means \emph{any element of $\cN_k(\Gamma, d)$} and
\[
  \small\Tree [.$\Box$ [.$\Box$ $\ast$ $\ast$ ] [.$\Box$  $\ast$ $\ast$ ] ]
\]
means any element of $\cN_k(\Gamma, d)$ (which consists of trees of depth $2$, not trees of depth $1$) whose topmost three nodes are labelled as in the picture above.

\newcommand{\pagedifference}[2]{\number\numexpr\getpagerefnumber{#2}-\getpagerefnumber{#1}\relax}
\newcommand{\ifsamepage}[4]{\ifnum \pagedifference{#1}{#2}=0 #3\else #4\fi}

\bsni
The transition function for $\cA_\ell$ may now be defined. In \ifsamepage{fig:A2-a10}{fig:A3-a10}{page \pageref{fig:A2-a10}}{pages \pageref{fig:A2-a10} and \pageref{fig:A3-a10}} below, we include the diagrams of the two scaffolds resulting from executing $\cA_2$ and $\cA_3$ on the string $a^{10}$. It might be helpful to follow those pictures, to get a sense of how $\cA_\ell$ works.

\begin{itemize}
\item If we are in the initial state and scaffold, the new top will point to the base, will be labelled by $\boxtimes$, and we move to state $q_1$: \[
  \delta\left(q_0, a, \ast\right) = (q_1, \boxtimes, \lambda, \varnothing).
\]
Above, $\lambda$ denotes the empty path, i.e., it is the path to the top node. This edge, edge number $0$, will always be set in this way, so that we may always refer to the previous top node by following edge $0$. The label $\boxtimes$ will be used to distinguish the first node from the rest.
\item We then count $\ell-1$ symbols, as follows: For every $i \in \{ 1, \ldots, \ell-1 \}$ we set \[
    \delta\left(q_i, a, \ast \right) = (q_{i+1}, \Box, \lambda, \varnothing).
\]
\item The state $q_\ell$ is accepting. The next symbol --- symbol number $\ell + 1$ --- triggers the beginning of two nested loops, the \emph{outer loop} and the \emph{inner loop}.
  As we begin the inner loop we point the backtracking edge to the current node in the scaffold (given by the empty path $\lambda$):
  \[
    \delta\left(q_\ell, a, \ast \right) = (q'_{1}, \Box, \lambda, \lambda).
  \]
The inner loop will loop between the states $q'_1, \ldots, q'_{\ell - 1}$, in such a way that, for each sequence of $\ell-1$ input symbols, the backtracking edge is moved backwards a single position in the scaffold. This happens until the backtracking edge reaches the node immediately before the base of the scaffold, at which point we enter the state $q''_1$, which runs the inner loop one last time until reaching state $q''_{\ell-1}$, which is accepting; at state $q''_{\ell-1}$, we ``reset'' the backtracking edge, and we restart the inner loop at $q'_1$. The outer loop consists of this resetting and restarting of the inner loop.

  Let us implement the inner and outer loops. The inner loop counts $\ell-1$ symbols, as follows: for every $i \in \{ 1, \ldots, \ell-2 \}$ we set \[
    \delta\left(q'_i, a, \ast \right) = (q'_{i+1}, \Box, \lambda, (1)).
\]
 When we have finished the inner cycle but have still not found the $\boxtimes$-marked node, we move the backtracking edge backwards, and loop the inner cycle:
  \[
    \delta\left(q'_{\ell-1}, a,
      \begin{array}{c}
\small\Tree [.$\Box$ [.$\Box$ $\ast$ $\ast$ ] [.$\Box$  $\Box$ $\ast$ ] ]
      \end{array} \right) = (q'_1, \Box, \lambda, (1, 0)).
  \]
\item Eventually the top node sees node $1$ of the scaffold at distance $2$ through the backtracking edge --- which we may detect since node $1$ is labelled with $\boxtimes$ instead of $\Box$. At this point we will finish running the inner loop using the $q'$ states, and then run it one last time using the $q''$ states, which behave just like the $q'$ states, except that $q''_{\ell - 1}$ is an accepting state whereas $q'_{\ell - 1}$ is not, and $q''_{\ell-1}$ resets the backtracking edge.

  This is implemented by setting
  \[
    \delta\left(q'_{\ell - 1}, a,
      \begin{array}{c}
        \small\Tree [.$\Box$ [.$\Box$ $\ast$ $\ast$ ] [.$\Box$  $\boxtimes$ $\ast$ ] ]
      \end{array}
    \right) = (q''_1, \Box, \lambda, (1, 0)),
  \]
  and, for each $i \in \{1, \ldots, \ell - 2\}$,
  \[
    \delta\left(q''_i, a, \ast \right) = (q''_{i+1}, \Box, \lambda, (1)),
  \]
  and finally
  \[
    \delta\left(q''_{\ell-1}, a, \ast \right) = (q'_1, \Box, \lambda, \lambda).
  \]
Compare $q''_{\ell-1}$ with $q'_{\ell - 1}$: $q''_{\ell-1}$ is an accepting state whereas $q'_{\ell - 1}$ is not, and $q''_{\ell-1}$ resets the backtracking edge, whereas $q'_{\ell-1}$ moves the backtracking edge one node backwards.
\end{itemize}

In the setup above, each run of the outer cycle consumes $\ell-1$-times as many symbols as the previous run, thus multiplying the total number of consumed symbols by $\ell$. For example, let us picture the run of $\cA_2$ on the string $a^{10}$.

\begin{center}
\begin{tikzpicture}[node distance=1.2cm,>=stealth',bend angle=45,auto]\label{fig:A2-a10}

  \tikzstyle{place}=[circle,thick,draw=blue!75,fill=blue!20,minimum size=6mm]
  \tikzstyle{red place}=[place,draw=red!75,fill=red!20]
  \tikzstyle{transition}=[rectangle,thick,draw=black!75,
  			  fill=black!20,minimum size=4mm]

  \tikzstyle{snode}=[circle,thick,draw=black,minimum size=6mm]
  
  \begin{scope}
\node [snode,label={$q_0$}] (n0) {$\varnothing$};

    \draw[line width=.5pt] (n0) -- +(135:0.5cm);
    \draw[line width=.5pt] (n0) -- +(225:0.5cm);

    \node [snode,label={$q_1$},accepting] (n1) [right of=n0] {$\boxtimes$}
    edge [post,bend right] (n0);
    \draw[line width=.5pt] (n1) -- +(225:0.5cm);
        
    \node [snode,label={$q_2$},accepting] (n2) [right of=n1] {$\Box$}
    edge [post,bend right] (n1);
    \draw[line width=.5pt] (n2) -- +(225:0.5cm);

    \node [snode,label={$q'_1$}] (n3) [right of=n2] {$\Box$}
    edge [post,bend right] (n2)
    edge [post,bend left] (n2);
    
    \node [snode,label={$q''_1$},accepting] (n4) [right of=n3] {$\Box$}
    edge [post,bend right] (n3)
    edge [post,bend left] (n1);
    
    \node [snode,label={$q'_1$}] (n5) [right of=n4] {$\Box$}
    edge [post,bend right] (n4)
    edge [post,bend left] (n4);
    
    \node [snode,label={$q'_1$}] (n6) [right of=n5] {$\Box$}
    edge [post,bend right] (n5)
    edge [post,bend left] (n3);
    
    \node [snode,label={$q'_1$}] (n7) [right of=n6] {$\Box$}
    edge [post,bend right] (n6)
    edge [post,bend left] (n2);
    
    \node [snode,label={$q''_1$},accepting] (n8) [right of=n7] {$\Box$}
    edge [post,bend right] (n7)
    edge [post,bend left] (n1);
    
    \node [snode,label={$q'_1$}] (n9) [right of=n8] {$\Box$}
    edge [post,bend right] (n8)
    edge [post,bend left] (n8);
    
    \node [snode,label={$q'_1$}] (n10) [right of=n9] {$\Box$}
    edge [post,bend right] (n9)
    edge [post,bend left] (n7);
  \end{scope}
\end{tikzpicture}
\end{center}

In the picture, the upper edge points to the previous node, and the lower edge is the backtracking edge. The state of the automaton when reading each node of the scaffold appears above the node, and the node is drawn as a double circle if this state is an accepting state. As required, the automaton accepts after seeing $1$, $2$, $4$, and $8$ symbols. 

For further illustration, let us picture the run of $\cA_3$ on $a^{10}$:

\begin{center}
\begin{tikzpicture}[node distance=1.1cm,>=stealth',bend angle=45,auto]\label{fig:A3-a10}

  \tikzstyle{place}=[circle,thick,draw=blue!75,fill=blue!20,minimum size=6mm]
  \tikzstyle{red place}=[place,draw=red!75,fill=red!20]
  \tikzstyle{transition}=[rectangle,thick,draw=black!75,
  			  fill=black!20,minimum size=4mm]

  \tikzstyle{snode}=[circle,thick,draw=black,minimum size=6mm]
  
  \begin{scope}
\node [snode,label={$q_0$}] (n0) {$\varnothing$};

    \draw[line width=.5pt] (n0) -- +(135:0.5cm);
    \draw[line width=.5pt] (n0) -- +(225:0.5cm);

    \node [snode,label={$q_1$},accepting] (n1) [right of=n0] {$\boxtimes$}
    edge [post,bend right] (n0);
    \draw[line width=.5pt] (n1) -- +(225:0.5cm);
        
    \node [snode,label={$q_2$}] (n2) [right of=n1] {$\Box$}
    edge [post,bend right] (n1);
    \draw[line width=.5pt] (n2) -- +(225:0.5cm);

    \node [snode,label={$q_3$},accepting] (n3) [right of=n2] {$\Box$}
    edge [post,bend right] (n2);
    \draw[line width=.5pt] (n3) -- +(225:0.5cm);
    
    \node [snode,label={$q'_1$}] (n4) [right of=n3] {$\Box$}
    edge [post,bend right] (n3)
    edge [post,bend left] (n3);
    
    \node [snode,label={$q'_2$}] (n5) [right of=n4] {$\Box$}
    edge [post,bend right] (n4)
    edge [post,bend left] (n3);
    
    \node [snode,label={$q'_1$}] (n6) [right of=n5] {$\Box$}
    edge [post,bend right] (n5)
    edge [post,bend left] (n2);
    
    \node [snode,label={$q'_2$}] (n7) [right of=n6] {$\Box$}
    edge [post,bend right] (n6)
    edge [post,bend left] (n2);
    
    \node [snode,label={$q''_1$}] (n8) [right of=n7] {$\Box$}
    edge [post,bend right] (n7)
    edge [post,bend left] (n1);
    
    \node [snode,label={$q''_2$},accepting] (n9) [right of=n8] {$\Box$}
    edge [post,bend right] (n8)
    edge [post,bend left] (n1);
    
    \node [snode,label={$q'_1$}] (n10) [right of=n9] {$\Box$}
    edge [post,bend right] (n9)
    edge [post,bend left] (n9);

\end{scope}
\end{tikzpicture}
\end{center}

\bsni
We started by describing the behaviour for $\cA_\ell$ in some detail, and then provided a fully formal specification. We will now limit ourselves to describing the behaviour in \emph{sufficient} detail, so that the reader may be convinced that a fully formal specification may also be done.

\bsni
Let us now sketch the scaffolding automata for the remaining two examples of Section \ref{sec:example-PEGs}.

Recognising the language of palindromes of power-two length (which also is its own reversal) uses the same idea of maintaining a backtracking edge, and it is similar to the $\ell = 2$ case of the implementation just shown. The backtracking edge is used not only to ensure that the length of the input is a power of two, but is also used to compare the last read symbol with its corresponding symbol. The corresponding symbol, as it turns out, is exactly the symbol under the backtracking edge, as may be verified by the reader by inspecting the run of $\cA_2$ on $a^{10}$, pictured above. In order to make this comparison, thus, the scaffolding automaton may simply label each node with the symbol which was read at that position, and then compare the label of the node under the backtracking edge with the symbol which is now being read. The automaton remembers any violation of this requirement in its finite control, and at each power-of-two length, it accepts if and only if no violation was found.

A scaffolding automaton for recognising the counting language works as follows. The first item in the sequence is of fixed finite length and thus may be recognised --- $\#0^r \circ 0$. Then noticing that if we have recognised the sequence up to $\cdots (n-1)_2^r\circ (n-1)_2\#$ and have an edge pointing to the rightmost bit of $(n-1)_2$, then we may verify, one by one from left-to-right, the bits of $(n)_2^r$ by the usual algorithm for addition. Then we must see a $\circ$, and, having kept an edge pointing to the rightmost bit of $(n)_2^r$, we may now recognise a reversal of $(n)_2^r$, i.e.~$(n)_2$. Then we must see a $\#$. So we have now recognised $\cdots (n)_2^r \circ (n)_2\#$, and we repeat.

This trick, which we have called \emph{reverse and scan}, will be used in the proofs of Theorems \ref{thm:universality} and \ref{thm:non-real-time}.

\subsection{Equivalence with PEGs}\label{sec:equivalence-PEGs}
The rest of this section is devoted to proving that scaffolding automata exactly characterise parsing expression grammars: \begin{theorem}\label{thm:peg-automata}
  A language $L \subseteq \Sigma^\ast$ is in $\PEG$ if and only if its reverse $L^r$ is decided by some scaffolding automaton.
\end{theorem}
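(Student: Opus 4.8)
# Proof Proposal for Theorem \ref{thm:peg-automata}

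The plan is to prove the two directions of the equivalence separately, in each case giving an explicit construction and arguing its correctness by induction on the structure of the computation. The key conceptual bridge is that a run of the recognition procedure $\rec_\cG$ on an input $x$, read left-to-right, behaves like a scaffolding automaton reading $x^r$: each new input symbol of $x^r$ corresponds to extending the ``already-read prefix'' of $x$ on its left end, and the scaffold node added at step $i$ will record, for every non-terminal $A$ (and more generally every subexpression of the grammar), the value $\rec_\cG(A, x_i \cdots x_n)$ — i.e.\ \emph{where} the recognition of $A$ would stop if started at position $i$. Since parsing expressions have bounded size, this is a bounded amount of information per position, except that ``where recognition stops'' is a position that can be far away — and this is exactly what the backtracking edges of the scaffold are for: the node at position $i$ stores pointers to the nodes at positions $\rec_\cG(A, x_i\cdots x_n)$ for the relevant $A$.

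\textbf{From PEG to scaffolding automaton.} Given a total PEG $\cG = \langle \Sigma, \NT, R, S\rangle$, let $\cE_0$ be the (finite) set of all subexpressions occurring in the rules of $\cG$, together with $\NT$. I would build a scaffolding automaton $\cA$ deciding $\cL(\cG)^r$ as follows. The working alphabet $\Gamma$ records, at each node $v$ (corresponding to position $i$ in $x$, where $x^r$ is the input to $\cA$), a function assigning to each $e \in \cE_0$ a \emph{local outcome}: either $\FAIL$, or ``consumed exactly $j$ more symbols'' for $j$ up to a constant bounded by the grammar's structure — but crucially, when $e$ consumes a non-constant amount, the node instead records \emph{which one of its outgoing edges} points to the node where recognition of $e$ stops. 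The degree $d$ and distance $k$ are chosen large enough to accommodate one edge per subexpression in $\cE_0$ (plus the ``previous node'' edge), and to let the transition function inspect the neighbourhood needed to compute, from the outcomes stored at the previous node (position $i+1$) and the newly read symbol $\sigma = x_i$, the outcomes at position $i$. The heart of the argument is verifying this recursion: for a terminal $a$, $\rec_\cG(a, \sigma z) = a$ iff $\sigma = a$; for $e_1 e_2$, one follows the edge for $e_1$ from the current node to some node $w$, then reads off $w$'s stored outcome for $e_2$; for $e_1/e_2$, one uses $e_1$'s outcome if it is not $\FAIL$, else $e_2$'s; for $\gnot e$ and $\gand e$, one negates/checks $e$'s outcome and emits a zero-consumption result; for $A \in \NT$, one copies the outcome of $R(A)$. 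Totality of $\cG$ guarantees no outcome is ``undefined''. The automaton accepts iff the stored outcome for $S$ at the \emph{top} node (which corresponds to position $1$ after all of $x^r$ is read, i.e.\ the start of $x$) says ``consumes all of $x$'', which is detectable because that node also knows it is at distance-to-base equal to $|x|$; one sets this up with a base-marker label exactly as in the $\cP_\ell$ example. A formal induction on $n - i$ (the length of the suffix $x_i\cdots x_n$) and a sub-induction on the structure of $e \in \cE_0$ shows the stored outcomes are correct.

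\textbf{From scaffolding automaton to PEG.} This is the direction I expect to be harder, and I would do it by having the PEG \emph{simulate the construction of the scaffold} while reading the reversed input. Given $\cA = \langle \Sigma, d, \Gamma, k, Q, \delta, q_0, F\rangle$ deciding $L^r$, I want a total PEG $\cG$ with $\cL(\cG) = L$, i.e.\ reading $x \in \Sigma^\ast$ left-to-right, $\cG$ should recognise $x$ iff $\cA$ accepts $x^r$. The idea is to have non-terminals that, when started at position $i$ of $x$, recognise (consume) exactly the suffix of $x$ that $\cA$ would read to reach a node with a given state/neighbourhood-profile; composing these lets the grammar ``walk backwards'' along the scaffold that $\cA$ builds. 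Concretely, I would introduce, for each state $q \in Q$ and each possible $k$-neighbourhood profile $\nu \in \cN_k(d,\Gamma)$, a non-terminal $\langle q, \nu\rangle$ whose recognition, starting at position $i$, succeeds and consumes all of $x_i\cdots x_n$ precisely when: running $\cA$ on $x_n \cdots x_i$ (the reversal of that suffix) from state $q_0$ on the initial scaffold ends in a configuration whose top has neighbourhood $\nu$ and whose state is $q$. The rule for $\langle q,\nu\rangle$ guesses the last symbol $\sigma = x_i$ read by $\cA$, uses $\delta$ to determine the predecessor configuration's state $q'$ and the edge structure just added, and recurses on $\langle q', \nu'\rangle$ for the appropriate predecessor neighbourhood $\nu'$ — the bounded degree and distance of $\cA$ ensure $\nu'$ is determined by $\nu$, $\sigma$, and a bounded amount of information that can be threaded through a bounded number of auxiliary non-terminals (one per edge-path of length $\le k$, each of which must ``jump back'' in $x$ to a node pointed to by an edge, using the reverse-and-scan / backtracking-edge technique from the examples). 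The start non-terminal $S$ is $\bigvee_{q \in F,\, \nu} \langle q, \nu\rangle$. The main obstacle is bookkeeping: a scaffold edge that jumps $m$ nodes back corresponds, in the PEG, to a non-terminal that must consume a matching amount of input, and chaining these consistently — so that the ``pointers'' the grammar reconstructs agree with the actual scaffold $\cA$ builds — requires a careful invariant and a simultaneous induction over the computation length. One also has to check the resulting PEG is \emph{total}, which follows because every recursive call strictly decreases the length of the remaining suffix (each step of $\cA$ reads one symbol), so there is no left recursion.

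\textbf{Summary of the main difficulty.} The PEG-to-automaton direction is essentially a clean bounded-memory recurrence, with the only subtlety being the use of backtracking edges to remember non-constant consumption amounts. The automaton-to-PEG direction is where the real work lies: one must faithfully reconstruct, inside the stack-like recursion of $\rec_\cG$ reading $x$ forwards, the scaffold that $\cA$ builds reading $x^r$ forwards — and in particular realise each bounded-length scaffold-edge traversal as a PEG non-terminal that consumes exactly the right span of input, then prove by induction on the number of steps that the two computations stay in lockstep. I expect that step to be the bulk of the proof.
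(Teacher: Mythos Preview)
Your necessary direction (PEG $\Rightarrow$ scaffolding automaton) is essentially the paper's proof: one edge per non-terminal, pointing to the node where that non-terminal's recognition stops, with the transition computing these edges from the rule forms and the previous top's edges. The paper normalises the grammar so every rule has a single operator, which is equivalent to your ``one edge per subexpression'' and makes the case analysis cleaner, but there is no real difference.

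The sufficient direction (automaton $\Rightarrow$ PEG) has a genuine gap. You organise the grammar around non-terminals $\langle q,\nu\rangle$ indexed by a state and a $k$-neighbourhood, and claim that ``the bounded degree and distance of $\cA$ ensure $\nu'$ is determined by $\nu$, $\sigma$, and a bounded amount of information''. This is false in the direction you need it. The transition $\delta(q',\sigma,\nu')$ produces edge-paths $p_0,\ldots,p_{d-1}$ of length $\le k$ from the old top; the \emph{new} top's $k$-neighbourhood then looks $k$ levels below each target of those paths, i.e.\ up to distance roughly $2k$ from the old top. So the new $k$-neighbourhood is not a function of the old $k$-neighbourhood, and conversely the old one is not recoverable from the new one plus bounded data. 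Enlarging $k$ does not help: tracking $(ck)$-neighbourhoods just pushes the dependency out to $(c{+}1)k$. Your parenthetical about auxiliary ``edge-path'' non-terminals is on the right track, but as long as the top-level non-terminals are indexed by a whole neighbourhood $\nu$, you cannot write a correct rule for them.

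The paper's fix is exactly to \emph{not} bundle the neighbourhood into a single index. It introduces separate families: $\HasState(q)$ and $\HasLabel(\gamma)$ (accepting without consuming), and, crucially, $\Path(p)$ for each $p\in[d)^{\le k}$, which \emph{consumes input up to the node that path $p$ reaches} in the scaffold. Because $\Path$ consumes, paths compose: $\Path(i\cdot p)$ first steps to the previous top, then follows the $i$-th edge (itself a $\Path(p_i)$ determined by that position's transition), then follows $\Path(p)$ from there. The non-terminal $\HasNeighbourhood(N)$ is then a conjunction of $\gand\Path(\cdot)$, $\gnot\Path(\cdot)$, and $\gand(\Path(\cdot)\,\HasLabel(\cdot))$ checks. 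This decomposition is what breaks the $k\to 2k\to\cdots$ blow-up; your proposal is missing it, and the reference to the ``reverse-and-scan'' trick from the examples is a red herring here, since that trick is about specific automata rather than the general simulation.
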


The question of whether PEG languages are closed under reverse now arises quite naturally. We conjecture that they are not, but Theorem \ref{thm:universality} below suggests it will be very hard to prove such a result.

\begin{proof}[Proof of Theorem \ref{thm:peg-automata}, necessary direction]
  We begin by proving that a parsing expression grammar for a language $L \subseteq \Sigma^\ast$ gives rise to a scaffolding automaton for $L^r$. A reader who is familiar with the tabular parsing algorithm of Birman and Ullman \cite{birman1970parsing} for TDPLs should be able to easily see that a scaffolding automata can simulate this algorithm (the edges will correspond to entries in the table). Since Ford \cite{ford2004parsing} has shown TDPLs are equivalent to PEGs, that suffices for obtaining the result.

  But Ford's proof of equivalence between PEGs and TDPLs is complex and delicate, whereas scaffolding automata are powerful enough to simulate PEGs directly. So we will prove the result here in full.

   Let $\cG = \langle \Sigma, \NT, R, S\rangle$ be a total parsing expression grammar. Without loss of generality, we may assume that every rule of $\cG$, has one of the forms:
 \begin{itemize}
 \item $A \leftarrow \ACCEPT$, $A \leftarrow \FAIL$, or $A \leftarrow t$, with $A \in \NT$ a non-terminal symbol and $t \in \Sigma$ a terminal symbol.
 \item $A \leftarrow \gnot B$, $A \leftarrow \gand B$ with $A,B \in \NT$.
 \item $A \leftarrow B C$, $A \leftarrow B / C$ with $A,B,C \in \NT$. 
 \end{itemize}
Indeed, any grammar may be converted into the form above by replacing sub-expressions with new non-terminal symbols.\footnote{For example, one would convert the rule $A \leftarrow \gand B C D / E F / \gnot G$ to the rules $A \leftarrow A_1 / A_3$, $A_1 \leftarrow B_1 A_2$, $B_1 \leftarrow \gand B$, $A_2 \leftarrow C D$, $A_3 \leftarrow A_4 / A_5$, $A_4 \leftarrow E F$ and $A_5 \leftarrow \gnot G$.}

\medskip\noindent
We then construct a scaffold automaton $\cA = \langle \Sigma, d, \Gamma, k, Q, \delta, q_0, F\rangle$, where
  \begin{itemize}
  \item $d = |\NT|$ and $k = |\NT|$.
  \item $\Gamma = \{ \Box \}$, as we will use a single label, to distinguish the end of the input from the remaining nodes.
  \item $Q = \{q_{\text{yes}}, q_{\text{no}} \}$, as we will use only two states, which will behave identically except that only one is accepting.
  \item $q_0 = q_{\text{yes}}$ if $\lambda \in \cL(G)$ and $q_0 = q_{\text{no}}$ otherwise.
  \item $F = \{q_{\text{yes}}\}$.
  \end{itemize}
 For $q\in Q$, $\sigma\in\Sigma$ and $N = (V, E, L) \in \cN_k(d,\Gamma)$, the transition function has
  \[
   \delta(q, \sigma, N) = (q', \Box, p_0, \ldots, p_{d-1}),
  \]
  defined as follows. Fix some ordering of $\NT$, and if $A$ is the $i$-th non-terminal symbol in $\NT$, let us use $p_A$ in place of $p_i$. Then:
  \begin{itemize}
  \item If $A \leftarrow \ACCEPT$, set $p_A = \mathsf{SELF}$, i.e., create a self loop in the new top node.
  \item If $A \leftarrow \FAIL$, or $A \leftarrow \sigma'$ with $\sigma' \neq \sigma$, then set $p_A = \varnothing$ --- the new top node will be missing edge $A$.
  \item If $A \leftarrow \sigma$, then set $p_A = \lambda$, i.e., create an edge from the new top to the previous top node.
  \item If $A \leftarrow \gnot B$, then we must first compute $p_B$, and then we set $p_A = \mathsf{SELF}$ if $p_B = \varnothing$, and $p_A = \varnothing$ otherwise.
  \item If $A \leftarrow \gand B$, then we must first compute $p_B$, and then we set $p_A = \mathsf{SELF}$ if $p_B \neq \varnothing$, and $p_A = \varnothing$ otherwise.
  \item If $A \leftarrow B C$, then we must first compute $p_B$; if $p_B = \varnothing$, then we set $p_A = \varnothing$ also; otherwise $p_B$ is a path to some node $v_B$ in $N$; this node will have some edge to $v_{BC} = e_{v_B}(C)$ in $N$ corresponding to $C$; we then let $p_A$ be a path to $v_{BC}$, which is one edge longer than $p_B$. This is where we require $k \ge |\NT|$.\footnote{It may be proven by induction on $|\NT|$ that whenever we set an edge of the new top node, it will be at a distance no greater than $|\NT|$ from the previous top node of the scaffold. Indeed, the only rule which may cause the required distance to increase is the concatenation rule $A \leftarrow B C$. In this case, when the edge $p_B$ points to a node $v_B$ which is a distance $i$ from the previous top node in the scaffold, then $p_A$ will point to the same node $v_{BC}$ as the edge $e_{v_B}(C)$ of $v_B$ corresponding to the non-terminal $C$. So the distance from the previous top node to $v_{BC}$ is now the distance to $v_B$ plus one, i.e., $i+1$. Since, as we argue later, there are no circular dependencies, the maximum distance is then $|\NT|$.}
  \item If $A \leftarrow B / C$, then we must first compute $p_B$ and $p_C$, and then we set $p_A = p_B$, if $p_B \neq \varnothing$, and otherwise we set $p_A = p_C$.
  \end{itemize}

  In the above procedure, we may assume that $p_B$ and $p_C$ are computed before $p_A$, when the rule for $A$ depends on $B$ and $C$. This is because the dependencies of the above procedure (when we say ``we must first compute \ldots'') correspond exactly to the subroutine calls of the recognition procedure $\rec_\cG$. Hence, if we have a cyclic dependency above this will cause $\rec_\cG$ to enter an infinite loop, and our assumption that $\cG$ is total implies that this never happens on any input. Hence if at some point a cyclic dependency is triggered, e.g. ``before computing $p_A$ we must first compute $p_B$ and before computing $p_B$ we must compute $p_A$'', then it may safely be ignored by setting the edge $p_A = \varnothing$, since we are guaranteed, by the totality of $\cG$, that $\rec_\cG$ will not be called for the non-terminal $A$ at this position, on any input.\footnote{Incidentally, it is based on this observation that one may convert a total PEG $\cG$ into an equivalent well-formed PEG. See the discussion after Definition \ref{def:PEGclass}.}

  \bsni
  The above definition ensures that the following property always holds:
  \begin{claim}\label{claim:pegscaffold}
    Let $x^r = x_n \cdots x_1 \in \Sigma^n$ and consider the scaffold $S = (V, E, L)$ obtained at the last step of the computation of $\cA$ on $x^r$. Then the edge of the top node $n \in V$ corresponding to the non-terminal $A \in \NT$ will be present if and only if the corresponding parsing expression $R(A)$ accepts $x = x_1 \cdots x_n$. When present, this edge will point to the position of $x^r$ corresponding to the symbol after $\rec_\cG(R(A), x)$. I.e., if $|\rec_\cG(R(A), x)| = \ell \ge 0$ is the number of consumed symbols, then $e_n \in E$ has $e_n(A) = n-\ell$.
  \end{claim}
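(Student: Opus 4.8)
\textbf{Proof plan for Claim \ref{claim:pegscaffold}.}

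The plan is to prove the claim by induction on the structure of the parsing expression grammar, mirroring exactly the recursion of the recognition procedure $\rec_\cG$. The single step $\Step_{\delta,x_1}$ that produces the top node $n$ from the scaffold built on the suffix $x_n \cdots x_2$ computes the edges $p_A$ in dependency order, and I claim this computation is in one-to-one correspondence with the calls $\rec_\cG(R(A), x_1\cdots x_n)$. Here it is crucial that the previous top node $n-1$ of the scaffold already satisfies the claim relative to the string $x_2 \cdots x_n$: that is the induction hypothesis on the length $n$ of the input, and the base case $n = 0$ (or $n=1$) is immediate since the only node is the base. So the real work is a nested induction: an \emph{inner} induction on the dependency-tree of non-terminals processed within a single step, assuming the \emph{outer} induction hypothesis about the previous top node.

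For the inner induction, I would go through each of the six rule forms. If $A \leftarrow \ACCEPT$, then $\rec_\cG$ consumes zero symbols, and indeed $p_A = \mathsf{SELF}$ gives $e_n(A) = n = n - 0$. If $A \leftarrow \FAIL$ or $A \leftarrow \sigma'$ with $\sigma' \neq x_1$, then $\rec_\cG$ rejects and $e_n(A) = \varnothing$. If $A \leftarrow x_1$, then $\rec_\cG$ consumes one symbol, and $p_A = \lambda$ points to node $n-1 = n - 1$. The cases $A \leftarrow \gnot B$ and $A \leftarrow \gand B$ follow by applying the inner hypothesis to $B$ (which is processed first) and reading off the definition of $\rec_\cG$ on negation. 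For $A \leftarrow B/C$: by the inner hypothesis $p_B$ is present iff $R(B)$ accepts, in which case it points to the correct node; so $p_A$ matches $\rec_\cG(B/C, x)$, which returns $\rec_\cG(B,x)$ if that succeeds and $\rec_\cG(C,x)$ otherwise.

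The main obstacle, and the step I would spend the most care on, is the concatenation rule $A \leftarrow B C$. Here, after $R(B)$ accepts $x = x_1 \cdots x_n$ consuming $\ell$ symbols, the recognition procedure then runs $\rec_\cG(R(C), z)$ on the \emph{suffix} $z = x_{\ell+1} \cdots x_n$. By the inner hypothesis, $p_B$ points to node $v_B = n - \ell$. Now I must argue that the edge $e_{v_B}(C)$ of that node correctly records the behaviour of $R(C)$ on $z$: this is exactly the outer induction hypothesis applied to node $v_B$, which was the top node when the scaffold had been built on the string $x_{\ell+1}^r \cdots x_n^r$, i.e. on $z^r$. So $e_{v_B}(C)$ is present iff $R(C)$ accepts $z$, and if it consumes $m$ symbols then $e_{v_B}(C) = v_B - m = n - \ell - m$; following the edge $p_B$ then the edge $C$ gives a path from $n$ to $n - \ell - m = n - |\rec_\cG(R(A),x)|$, as required. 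Two subsidiary points need mention: first, that this path has length $(\text{length of } p_B) + 1$, which combined with the bound $|p_B| \le |\NT|$ argued in the footnote keeps us within the distance $k = |\NT|$, so the transition function actually has access to the node it needs; and second, that the totality of $\cG$ guarantees the dependency order is acyclic, so the inner induction is well-founded and every $p_A$ genuinely gets computed before it is used — cyclic dependencies, when they arise, are safely set to $\varnothing$ precisely because $\rec_\cG$ is never invoked on that non-terminal at that position. Finally, from the claim applied to $A = S$ and the choice $F = \{q_{\text{yes}}\}$ together with the handling of the empty word by $q_0$, one reads off that $\cA$ accepts $x^r$ iff $\rec_\cG(S,x) = x$, i.e. iff $x \in \cL(\cG)$; but this last deduction belongs after the claim, so I would only set it up here.
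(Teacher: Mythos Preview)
Your proposal is correct and matches the paper's approach. The paper itself does not give an explicit proof of this claim --- it simply states that ``the above definition ensures that the following property always holds'' --- so your nested induction (outer on the input length, inner on the dependency order of non-terminals within a single step) is exactly the argument one must supply to verify it, and you have correctly identified the one non-trivial case (concatenation, where the outer hypothesis must be invoked at the earlier node $v_B = n-\ell$, not just at $n-1$) as well as the two auxiliary points the paper handles in passing: the path-length bound $\le |\NT|$ from the footnote, and the safe handling of cyclic dependencies via totality.
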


  Having defined how we create the new top node, it suffices to explain how the new state $q'$ is chosen. We will set $q' = q_{\text{yes}}$ if the new edge $e_t(S)$, where $t$ is the new top node, and $e_t(S)$ is the edge corresponding to the starting non-terminal of $\cG$, has been set to equal a node with empty label, i.e.~if $L(e_n(S)) = \varnothing$.  We set $q' = q_{\text{no}}$ otherwise. Since only the base of the scaffold has an empty label, we will be in an accepting state if and only if $S$ consumes the entire input seen thus far. By Claim \ref{claim:pegscaffold} it follows that $\cL(\cA) = \cL(\cG)$.
\end{proof}

\begin{proof}[Proof of Theorem \ref{thm:peg-automata}, sufficient direction]
  Now let $\cA = \langle \Sigma, d, \Gamma, k, Q, \delta, q_0, F\rangle$ be a scaffolding automaton accepting the language $L$. Assume without loss of generality (by duplicating states) that $\cA$ is only in the initial state $q_0$ at the very beginning of the computation, and never re-enters it after reading the first symbol.

  We construct a parsing expression grammar $\cG = \langle \mathsf{NT}, \Sigma, R,S\rangle$ recognising $L^r$ . The grammar $\cG$ will have the following non-terminals:
  \begin{itemize}
  \item For each $q \in Q$, we have a non-terminal $\HasState(q)$.
  \item For each $\gamma \in \Gamma$, we have a non-terminal $\HasLabel(\gamma)$.
  \item For each $N \in \cN_k(d,\Gamma)$, we have a non-terminal $\HasNeighbourhood(N)$.
  \item For each $p \in [d)^{\le k}$, we have a non-terminal $\Path(p)$.
  \item The initial non-terminal of the grammar is $\AutomatonAccepts$.
  \end{itemize}

  \begin{tcolorbox}[breakable]
    Now we will define various grammar rules, of the form $$N \leftarrow N_1 \;/\; N_2 \;/\; N_3 \;/\; \ldots,$$ where $N$ is one of the non-terminals $\HasState(q)$, $\HasLabel(\gamma)$, \emph{etcetera}, and $N_1, N_2, \ldots$ are parsing expressions.

    Below, when we say that we ``add an alternative $N \leftarrow E$'', we mean that the rule corresponding to the non-terminal $N$ should have the parsing expression $E$ appearing as one of the parsing expressions $N_i$ on the right-hand side. If no alternative was added in this process, for a given non-terminal $N$, then the rule corresponding to $N$ is instead $N \leftarrow \FAIL$.
    
    So, for example, if during the proof we add the alternative $N \leftarrow A$, the alternative $M \leftarrow B$, then the alternative $N \leftarrow C$, and no other alternatives were added, then the resulting grammar will have the rules $N \leftarrow A \;/\; C$ and $M \leftarrow B$, and for every non-terminal $O$ other than $N$ and $M$, we will have the rule $O \leftarrow \FAIL$.

    This allows us to specify how each transition of the scaffolding automaton affects the different rules appearing in the grammar. If we had to specify each rule of the grammar completely, then we would need to define the rules of the grammar in a fixed order with respect to the non-terminal appearing on the left side, which would obscure the idea behind the construction.
  \end{tcolorbox}
  
  Let $\Sigma = \{\sigma_1, \sigma_2, \ldots \}$ give the (finitely-many) symbols of $\Sigma$.  The rules of the grammar are defined as follows. We have the rule
  \[
    \HasState(q_0) \leftarrow \; \text{\tt !} \; ( \sigma_1 \;/\; \sigma_2 \;/\; \ldots )
  \]
  and if $N_0$ is the trivial neighbourhood containing a single unlabelled node with no edges (i.e.~the neighbourhood of the top node of the initial scaffold), we also have the rule
  \[
    \HasNeighbourhood(N_0) \leftarrow \; \text{\tt !} \; ( \sigma_1 \;/\; \sigma_2 \;/\; \ldots )
  \]
  This ensures that the end of the input of the grammar (which is the beginning of the input of the automaton) matches the initial state and neighbourhood.

  \medskip\noindent Now for each possible $q \in Q$, $\sigma \in \Sigma$, and $N \in \cN_k(d, \Gamma)$, we have a transition
  \[
   \delta(q, \sigma, N) = (q', \gamma, p_0, \ldots, p_{d-1}).
 \]
 Recall that this transition means ``if the scaffolding automaton is in state $q$, reads input symbol $\sigma$, and the neighborhood of the current top node is $N$, then it will move to state $q'$, and create a new top node with label $\gamma$, with edges given by the paths $p_0, \ldots, p_{d-1} \in [d)^{\le k}\cup \{\mathsf{SELF}, \varnothing \}$.''

 \medskip\noindent
  Let us write $\Transition(q, \sigma, N)$ as an abbreviation for the parsing expression $$\text{\tt \&}( \sigma \; \HasState(q) ) \; \text{\tt \&}( \sigma \; \HasNeighbourhood(N) ).$$ 
  We then add the alternative
  \[
    \HasState(q') \leftarrow \Transition(q, \sigma, N).
  \]
  These alternatives will be added for every transition given by $\delta$.  It will follow, by induction on the length of the input string, that $\HasState(q)$ will accept the string $x_{i}\cdots x_1$ if and only if the computation $\cA(x_1 \cdots x_i)$ ends in state $q$; even when it accepts, $\HasState(q)$ will never consume any input. Let $F = \{f_1, f_2, \ldots \}$ give the (finitely-many) accepting states. We then naturally have the rule
  \[
    \AutomatonAccepts \leftarrow (\HasState(f_1) \;/\; \HasState(f_2) \;/\; \ldots) \; \text{\tt .*}
  \]
  \medskip\noindent Then let $\lambda \in [d)^0$ be the sequence of length $0$.  We add the alternative $ \Path(\lambda) \leftarrow \ACCEPT, $ i.e., $\Path(\lambda)$ is always accepted and consumes no input.  Now take a sequence $i p \in [d)^{1 + \ell}$ of length $1 + \ell \ge 1$; then if $p_i \notin \{ \varnothing, \mathsf{SELF} \}$, we add the alternative
  \[
    \Path(i p) \leftarrow \Transition(q, \sigma, N) \;\; \sigma \;\; \Path(p_i) \;\; \Path(p)
  \]
  If $p_i = \varnothing$, we instead add the alternative:
  \[
    \Path(i p) \leftarrow \Transition(q, \sigma, N) \;\; \FAIL
  \]
  And if $p_i = \mathsf{SELF}$, we instead add the alternative:
  \[
    \Path(i p) \leftarrow \Transition(q, \sigma, N) \;\; \Path(p)
  \]
  It will follow by induction that the non-terminal $\Path(p)$ will accept the string $x_i \cdots x_1$ if and only if path $p$ goes from the top of the scaffold in the computation $\cA(x_1 \cdots x_{i})$, i.e.~from node $i$ in that scaffold, to some node $j \le i$. And, if the non-terminal $\Path(p)$ accepts $x_i \cdots x_1$, it will consume the input exactly up to (but not including) position $j$, i.e., it will consume the string $x_{i} \cdots x_{j+1}$ (the entire string will be consumed if $j = 0$, i.e., if the edge points to the base of the scaffold). Finally, we add the alternative
  \[
    \HasLabel(\gamma) \leftarrow \Transition(q, \sigma, N)
  \]

  \medskip\noindent The above alternatives may be added in any order, since the various conditions $\Transition(q,\sigma,N)$ are disjoint.
  The following observation is \emph{crucial} to understand why the above definitions are well-founded: the expression $\Transition(q,\sigma,N)$ uses $\HasState$ and $\HasNeighbourhood$ non-terminals, but \emph{only after consuming symbol $\sigma$}; so the accepting/consuming of the various non-terminals depends on the accepting/consuming of the same non-terminals, but in prior positions of the input, where this has already been determined.

  All we are left to do is explain how each $\HasNeighbourhood$ is defined. But notice that knowing whether the top of a scaffold has a certain neighbourhood consists of checking that certain paths exist, and that the nodes under these paths have certain labels, and that certain other paths do not exist. For example, if we wish to check for the neighbourhood $N \in \cN_2(2, \{\Box, \boxtimes\})$ where the top node is labelled $\Box$, the second edge of the top node leads to a child labelled $\Box$ and that child has itself a child labelled $\boxtimes$ on its first edge, i.e., if $N$ is the neighbourhood:
  \[
    \small\Tree [.$\Box$ [ ] [.$\Box$ $\boxtimes$ [] ] ]
  \]
  we then have the rule:
  \begin{align*}
    \mathsf{Nei}\mathsf{ghbourhood}(N) & \leftarrow \\
                & \gand \HasLabel(\Box) \\
                & \gnot \Path(0) \;\; \text{\tt \&}\Path(1)\\
                & \gand (\Path(1) \;\;\HasLabel(\Box))\\
                & \gand \Path(1,0)\;\; \gnot \Path(1, 1)\\
                & \text{\tt \&}(\Path(1,0 ) \;\; \HasLabel(\boxtimes))
  \end{align*}
  With this observation the proof is now complete.
\end{proof}

\bsni
We would like to make the following remark. It may be observed in the grammar above, which simulates a given scaffolding automaton, that the different alternatives may all be added in any order, since they cover disjoint cases. The reader should now suspect that the prioritized choice operator $/$ may, after all, be replaced by the usual disjunction operator $|$ from context-free grammars. This is entirely correct, since $A \;/\; B$ is equivalent to $A \mid (\gnot A) B$, where $\gnot$ is the negation operator of PEGs. It is the $\gnot$ operator that we cannot do away with: our simulation of scaffolding automaton uses the $\gnot$ operator both for detecting the end of the input and for detecting the absence of a path in the scaffold. Interestingly, it is possible to modify the above construction to remove the second use case, by adding an extra family of non-terminal symbols $\mathsf{NoPath}(p)$, that accepts the input exactly when $p$ is not a valid path starting at that position. The result of this is that any parsing expression grammar may be replaced by a grammar where the operators appearing in parsing expressions are $\gand$, $|$, and the special symbol $\mathsf{EndOfInput}$, which accepts only at the end of the input. Details are left to the reader.

\section{Applications}\label{sec:applications}

In this section we will use Theorem \ref{thm:peg-automata} to prove all of the remaining results mentioned in the abstract.

\subsection{``Universality''}\label{sec:universality}

\begin{theorem}\label{thm:universality}
  Let $f:\ZO^\ast\to\ZO^\ast$ be any computable function. Then there exists a computable function $g: \ZO^\ast \to \bbN$ such that the language
  \[
    L = \{ f(x) \wait^{\ell} x \mid x \in \ZO^\ast, \ell \ge g(x) \} \subseteq \{0,1,\wait\}^\ast
  \]
  has a parsing expression grammar.
\end{theorem}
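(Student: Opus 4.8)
The plan is to use the equivalence of Theorem~\ref{thm:peg-automata}: it suffices to build a scaffolding automaton $\cA$ recognising the reverse language $L^r = \{ x^r \wait^{\ell} (f(x))^r \mid x \in \ZO^\ast, \ell \ge g(x)\}$, and then $g$ can be read off from the construction. So I would design $\cA$ to process its input in three phases. In the first phase, while reading $x^r$, the automaton simply records $x^r$ into the scaffold, labelling each new node with the bit just read and linking it to the previous node by an edge; it also needs to recognise the boundary with the $\wait$-block. Since the automaton cannot store $x$ in its finite control, the actual computation of $f(x)$ must be ``replayed'' from the scaffold. The key realisation — and this is exactly why the counting-language example of Theorem~\ref{thm:counting} was flagged as crucial — is that a scaffolding automaton, using the ``reverse and scan'' trick, can simulate one step of a (multi-tape) Turing machine per input symbol: it keeps edges into the scaffold marking the current head positions and the position of the start of the configuration, reverses the relevant stretch of tape in the scaffold as it scans, and thereby implements left-to-right and (via reversal) right-to-left head moves. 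This is the machinery I expect to reuse essentially verbatim from the proof of Theorem~\ref{thm:universality}'s Turing-machine simulation, which the excerpt promises (``This trick $\ldots$ will be used in our simulation of Turing machines by PEGs'').

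Concretely: let $M$ be a Turing machine computing $f$, and let $T(x)$ be its running time on input $x$. I would define $g(x)$ to be large enough — say $g(x) = c\,(T(x) + |x| + |f(x)|)$ for a suitable constant $c$ depending on the number of tapes and the simulation overhead — so that the $\wait^{\ell}$ padding block gives the scaffolding automaton enough ``clock ticks'' to carry out the entire simulation of $M$ on $x$ while only the padding symbols $\wait$ are being consumed. During the second phase (reading the $\wait$'s), the automaton advances the simulation of $M$: it has an edge to the stretch of scaffold holding $x^r$, and it builds, on top of the scaffold, a sequence of encoded configurations of $M$, one per $\wait$ read (stalling harmlessly once $M$ has halted, since extra $\wait$'s are allowed by the $\ell \ge g(x)$ slack). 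When $M$ halts, the scaffold contains $f(x)$ (in some orientation) marked by an edge. In the third phase, as the automaton reads the suffix $(f(x))^r$ bit by bit, it walks that edge along the computed copy of $f(x)$ in the scaffold and checks each bit matches; it accepts iff the match is exact and the phase boundaries were as expected, using the finite control to remember any mismatch, exactly as in the palindrome and counting automata of Section~\ref{sec:expressive-power}.

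Two points need care. First, the automaton reads $x^r$ \emph{before} it reads the padding, so when it later wants to ``run $M$ on $x$'' it is scanning a reversed copy; but that is fine, since ``reverse and scan'' lets it access the bits in either order, and in any case $M$ can be taken to read its input in whichever direction is convenient. Second, and this is the main obstacle, is bounding the per-symbol work: a scaffolding automaton does $O(1)$ work per input symbol (one transition, looking only at a bounded-depth neighbourhood), so a single step of $M$ — which may involve moving several tape heads and, for right-moving heads, reversing a prefix of the tape contents in the scaffold — must itself be simulated in $O(1)$ scaffold-steps, i.e.\ while reading $O(1)$ padding symbols. The reverse-and-scan trick as used for the counting language reverses a whole \emph{block} per step, which would be too expensive if a tape is long; the fix is to maintain the tape contents already \emph{in the orientation the head is currently moving}, reversing incrementally — each head move reverses exactly the one cell being stepped over, costing $O(1)$ — and to keep a bounded number of ``bookmark'' edges (head positions, configuration start/end) that are updated by $O(1)$-length paths each step. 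Making this incremental-reversal bookkeeping precise, and verifying that all the required edges stay within the fixed neighbourhood distance $k$, is where the real work lies; the rest (choosing $g$, checking the three phases, reading off $\cL(\cA) = L^r$, and invoking Theorem~\ref{thm:peg-automata} to get the PEG) is routine.
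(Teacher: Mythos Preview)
Your three-phase plan and the invocation of Theorem~\ref{thm:peg-automata} are exactly right, and your proposed incremental simulation (maintain the tape as a left-of-head linked list and a right-of-head linked list in the scaffold, transferring one cell per head move) does work: with a constant number of padding symbols per Turing-machine step the automaton can create the one or two fresh nodes needed to record the newly-written cell and re-thread the two stack pointers, all via bounded-length paths from the current top. So your route is sound and in fact yields the stronger conclusion $g(x) = O(T(x))$.

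But you have made your life harder than the theorem requires. The statement only asks that $g$ be \emph{computable}; there is no need for $O(1)$ scaffold steps per Turing-machine step, and the ``obstacle'' you flag is therefore not an obstacle at all. The paper takes the lazier route you dismissed as ``too expensive'': using a one-tape machine $M$, during the $\wait$-block it simply labels the scaffold with the full sequence of configurations
\[
  c_0 \,\#\, c_0^r \,\#\#\, c_1 \,\#\, c_1^r \,\#\#\, \cdots \,\#\#\, c_t \,\#\, c_t^r,
\]
writing each $c_{i+1}$ by scanning $c_i^r$ backwards one cell per scaffold node, then copying $c_{i+1}$ in reverse so that $c_{i+2}$ can be produced the same way --- i.e.\ the block-level reverse-and-scan of the counting example applied verbatim. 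This costs $g(x)$ quadratic in the running time of $M$ (as the paper notes when deriving the $\PTIME$-completeness corollary), which is perfectly adequate and sidesteps all the incremental bookkeeping you identify as ``where the real work lies''. Your approach buys a linear $g$ at the price of that bookkeeping; the paper's buys a much shorter proof at the price of a larger, but still computable, $g$.
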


\begin{proof}
  We describe a scaffolding automaton for the reverse language $L^r$, and then the result follows from Theorem \ref{thm:peg-automata}. The basic idea is to use the \emph{reverse and scan} trick. For this purpose, let $M$ be a one-tape Turing machine computing $f$.

  The automaton first reads the input $x^r$, copying the symbols  of $x^r$ to the labels of the corresponding nodes and adding an edge connecting each node to the previous one.  It then finds the first $\wait$ symbol; at this point it continues reading $\wait$ symbols, while successively labelling the corresponding nodes of the scaffold with the successive configurations of the Turing machine $M$ on input $x$. After this it checks that the input matches the output of $M$ on input $x$. So, if $c_i$ is the configuration of $M$ on input $x$ at time-step $i$, and $M$ runs for $t$ time steps on input $x$, then the labels, when seen from first to last, form the string: \newcommand{\ovrs}[2]{\mathrlap{\hspace{1pt}#1}{\phantom{#2}}}\begin{align*}
      \text{labels:} \qquad & x^r \#\# \; c_0 \# c_0^r \#\# \; c_1 \# c_1^r \#\#  \; c_2 \# c_2^r \#\# \; \cdots \; c_t \# c_t^r \#\# \; \underbrace{\# \ldots \#}\\
      \text{input:} \qquad & x^r\ovrs{\wait}{\#}\ovrs{\wait}{\#}\; \ovrs{\wait}{c_0} \ovrs{\wait}{\#} \ovrs{\wait}{c_0^r} \ovrs{\wait}{\#}\ovrs{\wait}{\#} \; \ovrs{\wait}{c_1} \ovrs{\wait}{\#} \ovrs{\wait}{c_1^r} \ovrs{\wait}{\#}\ovrs{\wait}{\#}  \; \ovrs{\wait}{c_2} \ovrs{\wait}{\#} \ovrs{\wait}{c_2^r} \ovrs{\wait}{\#}\ovrs{\wait}{\#} \; \cdots \; \ovrs{\wait}{c_t} \ovrs{\wait}{\#} \ovrs{\wait}{c_t^r} \ovrs{\wait}{\#}\ovrs{\wait}{\#} \; \ovrs{\hspace{6pt}f(x)^r}{\# \ldots \#}
    \end{align*}
   Here $\#$ is being used as a separator. Note that $\$$ is also being used as a separator, but the symbol $\$$ is part of the actual language being recognized, and the symbol $\#$ is part of the alphabet being used to label the scaffold. 

  One may verify that the above labelling can be produced by a scaffolding automaton, provided we choose a reasonable encoding for Turing machine configurations (and for this purpose the working alphabet can be as large as desired). For example, we may encode a configuration by the sequence of symbols on the tape, and the position of the tape head will be additionally marked with some (finite) information containing the current state of the computation. With such an encoding, the scaffolding automaton can, for each $i$, produce the labels in the sequence $c_{i+1}$, provided that when reaching the first symbol of $c_{i+1}$, the top of the scaffold has an edge pointing to the last symbol of $c_i^r$ (which is easy to ensure), and that each node in the scaffold has an edge to the previous node; then the labelling $c_{i+1}$ is produced one symbol at a time by scanning $c_i^r$ starting with its last symbol, and  producing the symbols of $c_{i+1}$ according to the transition function of $M$. Similarly, for each $i$, one may produce the labels in the sequence $c_i^r$, provided that when reaching the first symbol of $c_i^r$, the top of the scaffold has an edge pointing to the last symbol of $c_i$; then the labelling $c_i^r$ is produced by copying one symbol at a time.

  The scaffolding automaton finally accepts if the last $\wait$ symbol corresponds exactly to the last position of the (reversal of) last configuration of the computation of $M$ on $x$, and the last $\wait$ symbol is followed by the string $y$ which is the reverse of the output written on the tape, in that final configuration; i.e.~if it is followed by $f(x)^r$.
\end{proof}

\bsni
We may now show that the recognition procedure underlying parsing expression grammars is complete for polynomial time, under logspace reductions. This was previously unknown, and stands in contrast with context-free grammars.
In the case of context-free grammars, we may define the complexity class $\mathsf{LOGCFL}$, to be the class of languages which are reducible to context-free languages under logspace reductions. It may be proven that this is exactly the class of languages decidable by log-depth Boolean circuits where the OR gates have arbitrary fan-in, and the AND gates have fan-in $2$ \cite[see][p.~137]{johnson1990catalog}. In particular, $\mathsf{LOGCFL}$ is a sub-class of $\NC_2$, which is believed to be strictly contained in $\PTIME$.

In contrast, if we were to define an analogous complexity class $\mathsf{LOGPEG}$, containing those languages that are reducible, via logspace reductions, to PEG-recognizable languages, it turns out that $\mathsf{LOGPEG} = \PTIME$. It is easy to see that $\mathsf{LOGPEG} \subseteq \PTIME$, since $\mathsf{PEG} \subseteq \PTIME$ and $\PTIME$ is closed under logspace reductions. The other direction follows as a corollary of Theorem \ref{thm:universality}.

\begin{corollary}
  There is a language $L\in\PEG$ which is complete for $\PTIME$ under logspace reductions.
\end{corollary}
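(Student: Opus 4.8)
The plan is to instantiate Theorem~\ref{thm:universality} with $f$ the characteristic function of a fixed $\PTIME$-complete language, and then observe that $U$ reduces to the resulting $\PEG$ language by a \emph{trivial} logspace map --- one that merely prepends the symbol $1$ and pads with $\wait$'s, and in particular never decides $U$ itself. Concretely, I would fix a language $U$ complete for $\PTIME$ under logspace reductions (say, the circuit value problem), let $M$ be a polynomial-time machine deciding it, and define $f:\ZO^\ast\to\ZO^\ast$ by $f(x)=1$ if $x\in U$ and $f(x)=0$ otherwise; this $f$ is polynomial-time computable. Theorem~\ref{thm:universality} then yields a computable $g$ with
\[
 L \;=\; \{\, f(x)\,\wait^{\ell}\, x \mid x\in\ZO^\ast,\ \ell\ge g(x)\,\}\ \in\ \PEG .
\]

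The reduction would be $r(x)=1\,\wait^{\,p(|x|)}\,x$ for a suitable monotone polynomial $p$ with $p(|x|)\ge g(x)$ for all $x$. Since $x\in\ZO^\ast$ contains no $\wait$, any way of writing $r(x)$ in the form $f(x')\,\wait^{\ell}\,x'$ must have $f(x')$ (a single symbol) equal to the first symbol $1$ of $r(x)$, hence $\wait^{\ell}x'=\wait^{p(|x|)}x$ with $x',x$ both $\wait$-free, which forces $x'=x$ and $\ell=p(|x|)\ge g(x)$. Therefore $r(x)\in L$ iff $f(x)=1$ iff $x\in U$. One also checks that $r$ is computable in logarithmic space: write $1$, then run a binary counter up to $p(|x|)$ emitting a $\wait$ at each step (the counter needs only $O(\log|x|)$ bits), then copy $x$. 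This makes $L$ hard for $\PTIME$ under logspace reductions; combined with $\PEG\subseteq\PTIME$ (the linear-time tabular parsing algorithm recalled earlier), $L$ is $\PTIME$-complete and lies in $\PEG$, which is exactly the claim. Taking $L$ as the witness finishes the corollary.

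The single step that is not purely mechanical is producing the polynomial $p$ with $p(|x|)\ge g(x)$: the \emph{statement} of Theorem~\ref{thm:universality} only promises a computable $g$, with no growth bound, and a super-polynomial $g$ would make $\wait^{\,p(|x|)}$ too long for $r$ to be a (polynomially bounded) logspace reduction. I expect this to be the main obstacle, and would resolve it by appealing to the \emph{proof} of Theorem~\ref{thm:universality} rather than its statement: the label string constructed there consists of one $O(|c_i|)$-sized block per time step $i$ of $M$ on $x$, so its total length --- and hence a usable value of $g(x)$ --- is bounded by a polynomial in the running time of $M$ on $x$; since $M$ runs in time $|x|^{c}$, taking $p(n)=n^{2c}+1$ works uniformly. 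Thus the real content of the corollary is just the remark that Theorem~\ref{thm:universality}'s construction is time-efficient whenever $f$ is, together with the observation that a reduction \emph{to} $L$ need not evaluate $f$, because the structure of $L$ already rejects the strings with the wrong leading bit.
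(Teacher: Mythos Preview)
Your proposal is correct and follows essentially the same approach as the paper: instantiate Theorem~\ref{thm:universality} with the characteristic function of a $\PTIME$-complete problem, observe from the \emph{proof} of that theorem that $g(x)$ is bounded polynomially in the running time of the simulated machine (the paper states this as ``$g$ grows quadratically in the running time of $M$''), and then reduce via $x\mapsto 1\,\wait^{\text{poly}(|x|)}\,x$. The only cosmetic difference is the choice of complete problem --- the paper uses the bounded-halting language $\{\langle N,0^t,y\rangle : N\text{ accepts }y\text{ in }\le t\text{ steps}\}$, where the explicit clock makes the running-time bound and hence the padding length especially transparent, while you use CVP; your unique-parsing argument for $r(x)\in L\iff x\in U$ is a nice detail the paper leaves implicit.
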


\begin{proof}
  Notice in the proof of Theorem \ref{thm:universality} that the resulting function $g: \ZO^\ast \to \bbN$ grows quadratically in the running time of the Turing machine $M$. Now consider the function $f$ such that $f(x) = 1$ if $x$ encodes a triple $\langle N, 0^t, y \rangle$ where, in turn, $N$ encodes a Turing machine which accepts input $y$ in $t$ or fewer steps, and $t \ge |N| + |y|$. And let $f(x) = 0$ otherwise. Then, computing $f(x)$ is a problem which is complete for polynomial time under logspace reductions. There are machines for computing $f$ in time $O(t^2)$, and hence $g(\langle N, 0^t, y \rangle) = O(t^4) \le c\cdot t^4$ for some sufficiently large integer constant $c$. The language $L$ of Theorem \ref{thm:universality} is thus also complete for polynomial time under logspace reductions, since $f(\langle N, 0^t, y \rangle) = 1$ if and only if $1\wait^{c \cdot t^4} \langle N, 0^t, y \rangle \in L$, and the string $1\wait^{c \cdot t^4} \langle N, 0^t, y \rangle$ may be computed from $\langle N, 0^t, y \rangle$ in logarithmic space.
\end{proof}

\subsection{Impossibility of a Pumping Lemma}

\bsni
We may define a pumping lemma by the following:
\begin{definition}
  A \emph{pumping lemma for PEGs} is a total computable function $A$ such that, for every total\footnote{Although the totality of a given PEG is undecidable, the results of this section still hold if ``total'' is replaced by ``well-formed''. (Recall that well-formedness of PEGs is a decidable syntactic restriction which ensures totality. See remarks after Definition \ref{def:PEGclass}.) It should be understood, hence, that the impossibility of a pumping lemma is not a hidden consequence of the undecidability of totality.} PEG $G$, there exists a length $n_0$ such that for every string $x \in \cL(G)$ of size $|x| \ge n_0$, the output $y = A(G, x)$ is in $\cL(G)$ and has $|y| > |x|$.
\end{definition}

\noindent
Some explanation is required as to why this definition is the right one.
\begin{itemize}
\item The first observation we may make is that, to our knowledge, every pumping lemma proven thus far either already is of the above form (e.g. \cite{bar1964formal,yu1989pumping,amarilli2012proof}) or can be made to work in the above form with few modifications (e.g. considering resource-bounded Kolmogorov complexity in \cite{li1995new}). 
\item The second observation is that if $A$ is not required to be total, then the definition trivialises: there exists a pumping lemma for every recursively-enumerable language. Indeed given any Turing machine $M$ and input $x$, $A$ can simply dovetail on all $y$ larger than $x$ until it finds a larger $y$ accepted by $M$ (if no such $y$ is found, $M$ decides a finite language, and so the requirement on $A$ is trivially satisfied). 
\item We mention also that the definition is equivalent to one where $A$ is required to produce an infinite sequence $y\ps1, y\ps2, \ldots$ of strings of increasing size, which is what one typically sees in pumping lemmas.
\end{itemize}

\begin{theorem}\label{thm:no-pumping-lemma}
  There is no pumping lemma for PEGs.
\end{theorem}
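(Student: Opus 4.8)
The plan is to diagonalise against all candidate pumping-lemma functions, using the universality result (Theorem~\ref{thm:universality}) to translate a computational diagonalisation into a statement about PEG languages. Suppose, for contradiction, that $A$ is a total computable pumping lemma for PEGs. The idea is to build, from $A$, a family of PEGs indexed by computable functions $f$, such that the behaviour of $A$ on the corresponding PEG languages forces a contradiction. Concretely, for each computable $f : \ZO^\ast \to \ZO^\ast$, Theorem~\ref{thm:universality} gives a computable $g_f$ and a PEG $G_f$ for the language $L_f = \{ f(x) \wait^{\ell} x \mid x \in \ZO^\ast, \ell \ge g_f(x) \}$. The key point is that the ``shape'' of strings in $L_f$ is rigid: a string of $L_f$ determines $x$ (as its suffix of $0/1$ symbols after the last $\wait$), determines $f(x)$ (as its prefix of $0/1$ symbols before the first $\wait$), and the only freedom is the number $\ell$ of $\wait$ symbols, which may be made arbitrarily large. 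So any ``pumping'' of a string in $L_f$ that strictly increases its length must, for all sufficiently long strings, increase the number of $\wait$'s while keeping $f(x)$ and $x$ fixed --- it cannot change $x$ or $f(x)$, since those determine each other via $f$ and a pumped string must still lie in $L_f$.

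The heart of the argument is to choose $f$ adversarially so that $A$ cannot even succeed at this trivial task. Fix the machine model and an effective enumeration, and define $f$ by diagonalisation against $A$ itself, in the style of the time-hierarchy theorem: on input $x$, interpret $x$ (via a standard pairing) as encoding some index, run $A$ on the pair $(G_{f}, x')$ for suitable shorter strings $x'$ --- here one must be careful to set things up so that the recursion is well-founded, e.g.\ by having $f$ on inputs of length $n$ only consult the behaviour of $A$ on inputs of length $<n$, and by using a fixed-point / recursion-theorem argument to give $f$ access to a description of $G_f$. The goal is to arrange that, for the particular $G_f$ produced, there is \emph{no} threshold $n_0$ that works: for every candidate $n_0$ we can exhibit a string $x \in \cL(G_f) = L_f$ with $|x| \ge n_0$ on which $A(G_f, x)$ either fails to land in $L_f$ or fails to strictly increase the length. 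Since by the rigidity observation the only way $A$ can succeed on long strings is to output $f(x)\wait^{\ell'}x$ with $\ell' > \ell$ and $\ell' \ge g_f(x)$, and this is a completely mechanical thing to check, we can have $f$ (using its self-reference) anticipate $A$'s output on the relevant inputs and encode into $f(x)$ a value that makes $A$'s output malformed --- for instance by making $f$ non-monotone in a way that breaks whatever length-increase $A$ attempts, or more simply by a direct ``$A$ says grows, so I make it not in the language'' diagonalisation on the prefix $f(x)$.

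The main obstacle, and the step requiring the most care, is exactly this circularity: $A$ is applied to $G_f$, but $G_f$ is built from $f$, which is itself defined in terms of $A$'s behaviour on $G_f$. Resolving this cleanly requires (i) the recursion theorem, to let $f$ compute a description of its own grammar $G_f$ (note $G_f$ is obtained effectively from a description of $f$ via the proof of Theorem~\ref{thm:universality}, so Kleene's recursion theorem applies), and (ii) a stratification by input length so that the definition of $f$ on length-$n$ inputs only simulates $A$ on strictly shorter inputs, ensuring $f$ is genuinely total and computable. One also has to check that the ``$\wait$-padding is flexible'' observation is used correctly --- that for the strings we diagonalise on, membership in $L_f$ really is governed solely by the relation $f(x)$ between prefix and suffix plus $\ell \ge g_f(x)$, so that $A$'s only escape route is closed off. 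Once the self-reference is set up, the contradiction itself is short: the hypothesised $n_0$ for $G_f$ cannot exist, because $f$ was designed to defeat $A$ at arbitrarily large input lengths.

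\textbf{Remark on well-formedness.} Since the PEGs $G_f$ produced by Theorem~\ref{thm:universality} are total, and since (as noted after Definition~\ref{def:PEGclass}) a total PEG can be effectively converted to an equivalent well-formed one, the same argument rules out a pumping lemma restricted to well-formed PEGs, justifying the footnote in the statement. Thus the impossibility is not an artefact of the undecidability of totality.
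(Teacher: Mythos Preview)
Your overall instinct---use Theorem~\ref{thm:universality} together with Kleene's recursion theorem to diagonalise against a putative pumping function $A$---matches the paper's strategy. But the concrete diagonalisation you sketch has a real gap, and in fact cannot be made to work on the languages $L_f$ as you have set them up.

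The problem is with your ``rigidity'' claim. You assert that for $w = f(x)\wait^{\ell}x \in L_f$, the output $A(G_f, w)$ must keep $x$ and $f(x)$ fixed and can only change the number of $\wait$'s. This is false on two counts. First, nothing forces $A$ to keep $x$ fixed: $A(G_f, w)$ merely has to be \emph{some} longer string in $L_f$, and $f(x')\wait^{\ell'}x'$ for a different $x'$ is perfectly admissible. Second, and fatally, even if $A$ does keep $x$ fixed, it \emph{always succeeds}: $f(x)\wait^{\ell+1}x \in L_f$ whenever $f(x)\wait^{\ell}x \in L_f$, since the defining condition is $\ell \ge g_f(x)$. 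So every $L_f$ is trivially pumpable by inserting a single $\wait$, and no choice of $f$ can break this---$f$ controls only the relationship between prefix and suffix, not the $\wait$-count. Your proposed diagonalisation (``encode into $f(x)$ a value that makes $A$'s output malformed'') therefore has no leverage: the $\wait$-padding you rely on for universality is exactly what makes these languages too easy to pump.

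What the paper does instead is build, via the recursion theorem at the level of scaffolding automata, a \emph{sparse} PEG language $\{y_1^r, y_2^r, \ldots\}$ in which consecutive strings are separated by gaps that $A$ itself dictates: one arranges $y_{j+1} = y_j\,\wait^{a_j}\,A(G, y_j^r)\,\#$, so that $|y_{j+1}| > |A(G, y_j^r)|$ and hence $A(G, y_j^r)$ is too short to be any later $y_k^r$. The universality theorem is used not to supply the language directly, but inside the automaton, to let it simulate $A$ and thereby recognise exactly the strings $y_j$. The missing idea in your proposal is precisely this: you must make the language sparse, with gaps exceeding $|A(G,\cdot)|$, rather than dense in a direction $A$ can exploit.
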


We must show that any candidate computable function $A$ must fail on some grammar. Intuitively one may quickly realise, by way of Theorem \ref{thm:universality}, that the size of ``the next string'' in the language decided by a parsing expression grammar may well grow as high as any computable function of our choice. Hence given any candidate procedure $A$ meant to serve as a pumping lemma, we should be able to find a PEG language such that the gap between consecutive words grows faster than what the existence of $A$ would allow. The only difficulty in making this argument precise is that we wish to run algorithm $A$ on a PEG for the very same language we are trying to define. This is solved much the same way as in the proof of Kleene's second recursion theorem (see \cite{sipser2012introduction}, \S6.1): one shows that it is possible to construct a scaffolding automaton which has access to its own encoding.

  \begin{proof}
  For any scaffolding automaton $X$, let $\langle X \rangle$ be a binary encoding of $X$. Let $S \in \bbS(d, \Gamma)$ be a scaffold and $w \in \Gamma^n$. We say that \emph{$S$ sees $w$ written backwards} if, for every $\ell \in [n)$, following the first edge once and then the second edge $\ell$ times, from the top of $S$, will place us in a node labelled by $w_{n - \ell}$. Suppose we have a scaffolding automaton $C$, which accepts an input of the form $\wait^s \langle X' \rangle$, where $\langle X' \rangle$ in turn is the encoding of some scaffolding automaton $X'$. Let $\langle C \rangle$ be an encoding of $C$. We then define a scaffolding automaton $X_{\langle C \rangle}$, which recognises a language $\cL(X_{\langle C \rangle}) = \{y_1, y_2, \dots \}$, via the following procedure:
  \begin{itemize}
  \item $X_{\langle C \rangle}$ begins by checking that the input begins with $\langle C \rangle$, in such a way that after this check, the resulting scaffold sees $\langle C \rangle$ written backwards;
  \item $X_{\langle C \rangle}$ also maintains an edge from the current top node to the previous top node, at every step of the computation, and always copies the input into the labels of the scaffold, so it is not forgotten.
  \item Then $X_{\langle C \rangle}$ simulates a run of $C$ itself, which by assumption recognises a string of the form:
    \[
      \wait^s \langle X' \rangle
    \]
    An edge to the last symbol of $\langle X' \rangle$ is preserved by $X_{\langle C \rangle}$ throughout the rest of the computation (on every top node henceforth);
  \item Then $X_{\langle C \rangle}$ checks that the following input is the sequence $\# \mathsf{Start}\#$, and enters an accepting state at this point.
  \item The scaffold now sees the string $y_1 = \langle C \rangle \wait^s \langle X' \rangle \# \mathsf{Start} \#$ backwards.
  \item    Then for each $j = 1, 2, \ldots$, the automaton repeatedly:
    \begin{itemize}
    \item Simulates the computation of $A(G_{\langle X'\rangle}, y_j^r)$, in order to recognise an input of the form $\wait^{a_j} A(G_{\langle X'\rangle}, y_j^r) \#$, where $G_{\langle X'\rangle}$ is the grammar recognising the reverse of the language decided by $X'$. The grammar $G_{\langle X'\rangle}$ is (constructively) given by Theorem \ref{thm:peg-automata}, and the automaton can recognise an input of this form by way of Theorem \ref{thm:universality}. Here we require that $A$ is total.
    \item After scanning this input (while copying it into the labels of the scaffold), the automaton enters an accepting state.
    \item The scaffold now sees backwards:
      \[
        y_{j+1} = y_j \wait^{a_j} A(G_{\langle X' \rangle}, y_{j}^r)\#.
      \]
  \end{itemize}
  \end{itemize}

  \bsni
  Let $B$ be the scaffolding automaton which, under the assumption that the top of the scaffold sees an encoding $\langle C \rangle$ written backwards, accepts a string of the form
  \[
    \wait^b \langle X_{\langle C \rangle} \rangle.
  \]
  
  Such a scaffolding automaton $B$ exists, by Theorem \ref{thm:universality}. Let $\langle B \rangle$ be the code for the above scaffolding automaton. Then let us consider the scaffolding automaton $X_{\langle B \rangle}$, which accepts $y_1, y_2, \ldots$ --- this sequence is infinite by our assumption that $A$ is total. Note that setting $C = B$ satisfies the assumption that $X_{\langle C \rangle}$ makes on $C$. The string $\langle X' \rangle$ recognised during execution of $X_{\langle B \rangle}$ is exactly $\langle X_{\langle B \rangle} \rangle$. Hence $G_{\langle X'\rangle} = G_{\langle X_{\langle B \rangle} \rangle}$ is a parsing expression grammar deciding the same language as $X_{\langle B \rangle}$, in reverse. i.e.~$G_{\langle X_{\langle B \rangle}\rangle}$ recognises the strings $y_1^r, y_2^r, \ldots$. Now let $n_0$ be an arbitrary natural number, and consider $y_{n_0}$; clearly $|y_{n_0}| \ge n_0$; and yet the smallest string larger than $y_{n_0}$ which is accepted by $X_{\langle B \rangle}$ is $y_{n_0+1} = y_{n_0} \wait^{a_{n_0}} A(G_{\langle X_{\langle B\rangle} \rangle}, y_{{n_0}}^r)\#$ --- but its size is strictly greater than $A(G_{\langle X_{\langle B\rangle} \rangle}, y_{n_0}^r)$, and so is the size of $y_{n_0 + k}$ for any natural $k > 1$; hence $A$ must fail on the grammar $G_{\langle X_{\langle B \rangle}\rangle}$. 
\end{proof}

\subsection{PEGs vs.~Online Turing Machines}\label{sec:peg-vs-online}

Because scaffolding automata are machines which read a single input symbol at a time, and which do only a constant number of operations per symbol read, they can be thought of as a \emph{real-time} computational model. This led us to conjecture that the reverse of any language in $\PEG$ could be recognised by a real-time Turing machine.
However this conjecture turns out to be demonstrably false. 

Let us begin by the following definition:

\begin{definition}
  An \emph{online Turing machine} is a Turing machine where the head of the input tape can only move in one direction. At the beginning of the computation, an input $x \in \Sigma^\ast$ is written on the the input tape, and the head of the input tape sits over the leftmost symbol of $x$, and every time the tape head is moved to the right, we say that \emph{another symbol from the input was read}. For convenience, an additional auxiliary tape is provided where the input size $|x|$ is given in binary.\footnote{So that one will not think that the lower-bounds we are about to prove result, somehow, from the fact that the machine does not know the input size. Indeed the reason why the lower-bound holds is more profound. We may even fill the auxiliary tape with any content we please (as a function of $n$), i.e.~the lower-bounds here proven will hold even in the presence of non-uniform advice.}

  The class $\Online(t(n))$ is the class of languages $X \subseteq \Sigma^\ast$ which can be decided by an online Turing machine $M$, in the following way. If $x \in \Sigma^n$, then $M(x)$ accepts if $x \in X$ and rejects otherwise, and furthermore, the computation $M(x)$ does at most $t(n)$ steps between each input symbol read.
\end{definition}

\bsni
This section is devoted to proving the following:
\begin{theorem}\label{thm:non-real-time}
  There exists a language $L \in \PEG$ such that neither $L$ nor $L^r$ is in $\Online(t(n))$, for any $t(n) = o(n/(\log n)^2)$.
\end{theorem}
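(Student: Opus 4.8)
The plan is to exhibit a single language $L\in\PEG$ and argue that neither $L$ nor $L^r$ can be recognised by an online Turing machine in $o(n/(\log n)^2)$ time per symbol. The language should combine two ingredients: (1) a ``counting skeleton'' whose presence the scaffolding automaton can enforce --- morally the reversed counting language of Theorem~\ref{thm:counting}, which forces the input to contain, somewhere in its structure, the binary encodings $(0)_2,(1)_2,\ldots,(n)_2$ in sequence --- and (2) a ``query'' mechanism: after the skeleton, the input supplies a position $i$ (again in binary) and a bit $b$, and $L$ demands that $b$ equals the $i$-th bit of the value of a counter that was implicitly fixed much earlier in the string. Because of the \emph{reverse-and-scan} trick (used in Theorems~\ref{thm:universality} and, per the statement, here), a scaffolding automaton can maintain an edge back to the relevant block and verify such a query in real time; by Theorem~\ref{thm:peg-automata} the reverse of such a language has a PEG, so $L$ itself (suitably reversed in the definition) is in $\PEG$. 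First I would write down the precise language, making sure the skeleton length is $\Theta(N/\log N)$ blocks of length $\Theta(\log N)$ for input length $N$, so that there are $\Theta(N/\log N)$ independent ``cells'' of $\Theta(\log N)$ bits of information each.

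Second, I would prove the online lower bound by a communication/information argument. Cut the input at the point just after the skeleton but before the query $(i,b)$. An online machine, having read the prefix, is in some configuration; the only thing it carries forward about the skeleton is the content of its work tapes near their heads. If the machine runs in $t(n)$ steps per symbol and there are $m=\Theta(N/\log N)$ skeleton symbols still influencing later queries, then over the skeleton-reading phase it has done $O(m\cdot t(N))$ total steps, hence the ``information bottleneck'' it can transmit across the cut is $O(m\cdot t(N)\cdot\log|\text{alphabet}|)$ bits in the crude accounting, but the sharper Hennie--Stearns / crossing-sequence style bound gives that the number of head crossings of any fixed tape boundary is $O(t(N))$ per block, so across the whole phase only $O(m\cdot t(N))$ cells-worth of information can flow forward --- times $\log N$ for the cell contents gives $O(m\cdot t(N)\cdot\log N)$ bits. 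Meanwhile the family of admissible skeletons encodes $\Theta(m\log N)=\Theta(N)$ bits of information that the machine must be able to answer arbitrary queries about, so we need $m\cdot t(N)\cdot\log N=\Omega(m\log N)$, i.e. $t(N)=\Omega(1)$ --- that's too weak, so the argument must be localised: for \emph{each} of the $m$ blocks, pick the boundary between that block's region and the query region, and argue the crossing sequence there has length $\Omega(\log N/\log N)=\Omega(1)$...

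Let me restate the counting correctly. The right framing is: there are $m=\Theta(N/(\log N)^2)$ many \emph{groups}, each carrying $\Theta((\log N)^2)$ bits, and the machine must answer a query about a single group chosen adversarially at the very end. Standard crossing-sequence arguments (as in Li--Vit\'anyi \cite{li1995new}, \S6, for the one-tape palindrome bound, adapted to the online multitape setting) show that to preserve enough information about a group situated at distance $\sim\ell$ from the query region, the total work near the relevant tape positions during the $\Theta(\ell/\log N)$ input symbols of that region must be $\Omega((\log N)^2)$, forcing an average of $\Omega(\log N)$ steps... this still doesn't reach $n/(\log n)^2$. The cleanest route, and the one I would actually carry out, is a counting argument over \emph{the whole online trace}: an online machine doing $T$ total steps on inputs of length $N$ has at most $2^{O(T+\log N)}$ distinguishable behaviours on the skeleton-prefix (state $\times$ head positions $\times$ the $O(T)$ tape cells it ever visited); if the admissible skeletons number $2^{\Omega(N)}$ and each must be distinguished by some later query, we need $2^{O(T+\log N)}\ge 2^{\Omega(N)}$, i.e. $T=\Omega(N)$; but $T\le N\cdot t(N)$ so $t(N)=\Omega(1)$. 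To push $t(N)$ up to $n/(\log n)^2$ one must instead count \emph{per-boundary} crossing sequences and sum: partitioning the skeleton region into $\Theta(\log N)$-symbol windows, there are $\Theta(N/\log N)$ window-boundaries on each of the $O(1)$ tapes, each crossing sequence has length $\le$ (steps spent near it) and the $\Theta(N/\log N)$ windows jointly encode $\Theta(N)$ bits, which must all be reconstructible from the $\Theta(N/\log N)$ crossing sequences; a Kolmogorov-complexity incompressibility argument on a random admissible skeleton then forces some crossing sequence to have length $\Omega(\log N)$, hence $\Omega(\log N)$ steps concentrated in a window of $\Theta(\log N)$ symbols --- and iterating/amplifying this (replacing single bits by $\Theta((\log N)^2)$-bit blocks so that each window must export $\Omega((\log N)^2)$ bits through a crossing sequence over an alphabet of size $N^{O(1)}$, i.e. of length $\Omega(\log N)$, spread over $\Theta(\log N)$ symbols) yields $t(N)=\Omega(N/(\log N)^2)$ on the relevant infinitely many input lengths.

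\textbf{Main obstacle.} The PEG/scaffolding side is routine given Theorem~\ref{thm:peg-automata}, Theorem~\ref{thm:counting}, and the reverse-and-scan trick: I would just describe the scaffolding automaton that checks the counting skeleton block-by-block (maintaining the previous block's endpoint edge, as sketched for the counting language), and then, upon reading the query $(i,b)$, uses a preserved edge to the queried block and a left-to-right scan to verify bit $i$ equals $b$ --- all in $O(1)$ work per symbol. The genuinely hard part is the online lower bound, and specifically making the crossing-sequence / incompressibility bookkeeping yield the exact $o(n/(\log n)^2)$ threshold rather than a weaker $o(\log n)$ or $o(\sqrt n)$ bound: one must choose the block sizes so that (a) a single block holds $\Theta((\log n)^2)$ bits, (b) there are $\Theta(n/(\log n)^2)$ blocks so the skeleton still has length $\Theta(n)$, and (c) an averaging argument over a Kolmogorov-random skeleton shows that \emph{some} block's information is forced through a short stretch of $\Theta(\log n)$ input symbols, compelling $\Omega((\log n)^2/\log n)=\Omega(\log n)$... --- the arithmetic must be arranged so the ``per-symbol'' time, which is total-work-near-a-boundary divided by $\Theta(\log n)$ symbols, comes out to $\Omega(n/(\log n)^2)$, which requires the exported information per boundary to be $\Omega(n/\log n)$ bits, and that in turn is what dictates using a global rather than local partition. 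I would model this bookkeeping closely on the one-tape Turing machine lower bounds for palindromes in Li--Vit\'anyi \cite[\S6.1,\S6.13]{li1995new}, adapting from one-tape to online-multitape by treating each of the finitely many tapes' boundary-crossing sequences together, and I expect reconciling the multitape generalisation with the tightness of the exponent to be the crux of the full proof. The requirement that \emph{both} $L$ and $L^r$ fail to be online-recognisable is handled by choosing $L$ self-reverse-like (the counting skeleton's reverse is again a counting-type skeleton), or by taking $L$ to be the concatenation/disjoint union of the language and a disguised copy of its reverse so that an online machine for $L^r$ would yield one for the offending half.
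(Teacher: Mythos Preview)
Your proposal has two genuine gaps, one on each side of the argument.

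\textbf{The lower bound.} You are reaching for crossing sequences and Kolmogorov incompressibility, and you can see yourself that the arithmetic never closes: you keep getting $t(n)=\Omega(1)$ or $\Omega(\log n)$ and then hand-wave an ``amplification''. The paper's route is far cleaner and avoids all of this. It uses Rosenberg's equivalence-class method: for $y_1,y_2\in\Sigma^\ell$, declare $y_1\equiv_L^{\ell,m}y_2$ if they agree on all length-$m$ suffixes, and let $E_L(\ell,m)$ be the number of classes. The key observation (Hartmanis--Stearns) is that if $L\in\Online(t(n))$ then $E_L(\ell,m)\le 2^{O(m\cdot t(\ell+m))}$, because after reading the prefix only the $O(m\cdot t(\ell+m))$ tape cells within reach of the heads matter for the rest of the computation. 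So one only needs a language with $E_L(O(D\cdot 2^D),D)\ge 2^{2^D}$: then $2^{2^D}\le 2^{O(D\cdot t(n))}$ with $n=\Theta(D\cdot 2^D)$ forces $t(n)=\Omega(2^D/D)=\Omega(n/(\log n)^2)$ directly. No crossing sequences, no incompressibility, no averaging.

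\textbf{The scaffolding-automaton side.} You call this ``routine'', but it is the heart of the proof, and your sketch does not work. You propose: read a skeleton, then read a query index $i$ in binary, then ``use a preserved edge to the queried block''. But a scaffolding automaton has bounded out-degree, so the top node can preserve only $O(1)$ edges; it cannot keep a pointer to each of $2^D$ blocks simultaneously, and it does not know in advance which block the query will name. Reading the $D$ bits of $i$ adds $D$ new nodes to the scaffold, each with $O(1)$ edges into the already-built structure --- there is no mechanism here for ``random access'' to position $i$ of the prefix. The paper solves exactly this problem: it shows how, while processing an input of length $O(D\cdot 2^D)$, a scaffolding automaton can \emph{build a complete binary tree of depth $D$} inside the scaffold (using the reverse-and-scan counting skeleton to carry the recursive subtree pointers from block to block; when the counter increments through a run of $1$'s, the carry ``merges'' the subtrees hanging off those bits into a new, deeper subtree). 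Then the $D$-bit query is interpreted as a \emph{path} down this tree --- one edge per query bit --- landing on one of $2^D$ distinct nodes of the prefix, whose input bit can then be checked. This tree-building construction is the main technical content of the section, and your proposal does not contain it.

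Finally, on ``both $L$ and $L^r$'': the paper does take the disjoint-union route you mention at the end, but the two halves are genuinely different. One half is the easy language $K$ with $K^r=\{x\#w_1\#\cdots\#w_N\mid\exists i\;x^r=w_i\}$ (trivial for a scaffolding automaton, and $K$ has $2^{2^D}$ classes); the other half is the hard language $H$ above, which is itself (not its reverse) decided by a scaffolding automaton and itself has $2^{2^D}$ classes. You need both because the ``easy'' direction for the scaffolding automaton and the ``hard'' direction for the online machine are opposite for $K$ but aligned for $H$.
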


\bsni
The proof of this theorem uses the method of Rosenberg (see \cite{rosenberg1967real}, \S4.1), for proving lower-bounds against online Turing machines. We will explain it here for completeness.
\begin{definition}
  Let $L \subseteq \Sigma^\ast$ and $\ell, m \in \bbN$. We then say that two strings $y_1, y_2 \in \Sigma^\ell$ are $(L,\ell, m)$-equivalent, which we write $y_1 \equiv_L^{\ell,m} y_2$, if
  \[
    \forall x \in \Sigma^m (y_1 \cdot x \in L \iff y_2 \cdot x \in L)
  \]
  We may then define the sets $\cE_{L}(\ell, m) = \Sigma^\ast\slash\equiv_L^{\ell,m}$ of $(L, \ell, m)$-equivalence classes. To each $L \subseteq \Sigma^\ast$, then, corresponds a function $E_L:\bbN\times\bbN \to\bbN$ giving the number of $(L,\ell, m)$-equivalence classes:
  \[
    E_L(\ell, m) = |\cE_{L}(\ell, m)|
  \]
\end{definition}

\bsni
The framework of Rosenberg then rests on the following crucial observation:
\begin{theorem}[\cite{hartmanis1965computational}] \label{thm:not-online-method}
  If $L \in \Online(t(n))$, then $E_L(\ell, m) \le 2^{O(m \cdot t(\ell + m))}$.
\end{theorem}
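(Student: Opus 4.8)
\textbf{Proof proposal for Theorem~\ref{thm:not-online-method}.} The plan is to count, for a fixed online Turing machine $M$ deciding $L$ in $\Online(t(n))$, how much information about a prefix $y \in \Sigma^\ell$ can possibly survive in the machine's configuration at the moment it has just finished reading $y$ and is about to read the $m$ suffix symbols. The point is that if two prefixes $y_1, y_2$ leave $M$ in ``the same situation'' at that moment, then $M$ must answer identically on $y_1 x$ and $y_2 x$ for every $x \in \Sigma^m$, hence $y_1 \equiv_L^{\ell,m} y_2$; so $E_L(\ell,m)$ is at most the number of distinguishable situations.

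First I would make precise what the relevant ``situation'' is. After reading the last symbol of $y$, the machine will perform some further computation consuming only the $m$ suffix symbols (plus the end of input), and during that phase it does at most $t(\ell+m)$ steps per symbol read, so at most $O(m \cdot t(\ell+m))$ total steps (treating the final burst after the last symbol as one more block of $t(\ell+m)$ steps, absorbed into the $O(\cdot)$). In $s := O(m\cdot t(\ell+m))$ steps, each work-tape head can move at most $s$ cells, so only a window of $O(s)$ cells around each head can be read or written during the suffix phase; everything else on the work tapes is irrelevant to the machine's future behaviour. Therefore the behaviour of $M$ on $y x$, for all $x \in \Sigma^m$, is completely determined by: the current control state, the head positions relative to those windows, and the contents of those $O(s)$-cell windows on the $O(1)$ work tapes. (The input-tape head cannot move left, so the already-consumed prefix $y$ is never revisited; and the auxiliary tape holding $|x|$ in binary is a fixed function of $\ell+m$, the same for $y_1$ and $y_2$, so it contributes nothing to the count.)

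Next I would count these situations. There are $|Q| = O(1)$ control states, $O(1)$ work tapes, each head sits in one of $O(s)$ positions, and each window of $O(s)$ cells over a constant-size work alphabet has at most $|\Gamma|^{O(s)} = 2^{O(s)}$ possible contents. Multiplying, the number of distinguishable situations is $2^{O(s)} = 2^{O(m\cdot t(\ell+m))}$, and each determines a single equivalence class in $\cE_L(\ell,m)$, giving $E_L(\ell,m) \le 2^{O(m\cdot t(\ell+m))}$ as claimed.

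The main obstacle is being careful about what counts as a ``step between input symbols''. One has to argue that the total work done during the suffix phase really is $O(m\cdot t(\ell+m))$ and not something larger — in particular that the final burst of computation after the last input symbol is also bounded by $t(\ell+m)$ (or by a constant multiple thereof), which is exactly what the online-model definition guarantees if we count the end-of-input marker as a read; and one must make sure the cell-window argument is robust to the heads wandering — but since total time is $s$, total head displacement is at most $s$ per tape, so a window of width $2s+1$ centered at the head's position when the suffix phase begins always suffices. These are routine once the bookkeeping is set up, so I would not belabor them.
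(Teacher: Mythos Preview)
Your proposal is correct and follows essentially the same argument as the paper: both bound $E_L(\ell,m)$ by the number of partial configurations consisting of the finite control state, head positions, and the contents of an $O(m\cdot t(\ell+m))$-cell window around each work-tape head at the moment the suffix phase begins. Your version is in fact more careful than the paper's, explicitly addressing the auxiliary tape, the one-way input head, and the final burst after the last symbol, all of which the paper leaves implicit.
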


\begin{proof}
  Let $M$ be an online Turing machine that decides whether $z \in L$ by making $\le t(|z|)$ computation steps per symbol. Let $y \cdot x \in \Sigma^n$, where $y \in \Sigma^\ell$, $x \in \Sigma^m$ and $n = \ell + m$. Consider the configuration $C$ of the computation $M(y \cdot x)$, after $M$ has read all the $\ell$ symbols in $y$ and done whichever computation it does on them, and precisely before it reads the first symbol of $x$. As $M$ then proceeds to read the $m$ symbols of $x$, it can only do $t(n)$ steps per symbol; and thus if one would describe the configuration $C$ partially, by giving only the state of the finite control, the position of the tape heads, and the contents of the tape heads at a distance of $\le m \cdot t(n)$ from the position of the tape heads, then one can simulate the entire computation to its very end.

  But since there are only $2^{O(m \cdot t(n))}$-many such possible partial descriptions, then this behaviour can only proceed in so-many different ways.
\end{proof}

\bsni
As a warm-up, we begin by showing the following easy result:
\begin{theorem}
  There is a language $K \subseteq \{0,1,\#\}^\ast$ in $\PEG$, such that $E_{K}((D+1)\cdot 2^D, D) \ge 2^{2^D}$ for all $D \in \bbN$. Hence $K \notin \Online(t(n))$, for any $t(n) = o(n/(\log n)^2)$.
\end{theorem}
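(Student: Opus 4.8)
The target language $K$ should encode, for each block length $D$, an arbitrary subset of $\{0,1\}^D$, so that fixing a prefix $y$ amounts to fixing a truth-table of a function on $D$ bits, of which there are $2^{2^D}$. Concretely, I would take strings of the form
\[
  w_1 \# w_2 \# \cdots \# w_k \# u,
\]
where the $w_i \in \{0,1\}^D$ are the (distinctly listed) elements of some set $T \subseteq \{0,1\}^D$ in some canonical order, and $u \in \{0,1\}^D$ is a ``query'' string; membership in $K$ holds iff $u \in T$. The prefix $y$ of length $(D+1)\cdot 2^D$ then lists all of $\{0,1\}^D$ up to a choice of which elements are ``present'' — more precisely, I'd arrange the format so that a length-$(D+1)2^D$ prefix encodes, for each of the $2^D$ possible blocks, one bit saying whether it is in $T$; two such prefixes that disagree on some block $v$ are separated by the suffix $x = v$ of length $D$. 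That gives $E_K((D+1)2^D, D) \ge 2^{2^D}$ directly from the definition of $E_L$.

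**Why $K$ is in $\PEG$.** Here I would invoke Theorem~\ref{thm:peg-automata} and build a scaffolding automaton for $K^r$, using the \emph{reverse and scan} trick exactly as in the counting-language example and in the proof of Theorem~\ref{thm:universality}. Reading $K^r$, the automaton first sees (the reverse of) the query block $u$ and copies it into the scaffold labels, maintaining an edge to the previous node; then it encounters the list of blocks $w_k^r \# \cdots \# w_1^r$. For each block it meets, it can, using a same-length comparison gadget (cf. the $\mathsf{SameLength}$ non-terminal), check whether that block equals $u$, remembering a single ``found a match'' bit in the finite control. At the end it accepts iff the match bit is set and the overall syntactic format is correct (blocks all of the same length $D$, correctly separated, the list is the canonical enumeration, etc.). All of these are local, bounded-degree checks of exactly the kind already shown to be implementable; the only mildly delicate point is verifying the ``canonical order / all blocks of equal length'' syntax, but this is the same style of bookkeeping as maintaining block lengths in the counting language, so I would describe it at the level of detail the paper has adopted for later automata (``sufficient detail that the reader is convinced a full specification exists'').

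**From the equivalence-class bound to the lower bound.** Once $E_K((D+1)2^D, D) \ge 2^{2^D}$ is established, the non-membership follows by contraposition from Theorem~\ref{thm:not-online-method}. Suppose $K \in \Online(t(n))$. Put $\ell = (D+1)2^D$ and $m = D$, so $n = \ell + m = (D+1)2^D + D \le (D+2)2^D$. Then Theorem~\ref{thm:not-online-method} gives $2^{2^D} \le E_K(\ell,m) \le 2^{O(m\, t(\ell+m))} = 2^{O(D\, t(n))}$, hence $2^D \le O(D\, t(n))$, i.e. $t(n) = \Omega(2^D/D)$. Since $n \le (D+2)2^D$ we have $2^D \ge n/(D+2)$ and $D = \Theta(\log n)$, so $t(n) = \Omega\big(n/(\log n)^2\big)$ along the sequence of lengths $n = n(D)$; as these lengths are infinite, this contradicts $t(n) = o(n/(\log n)^2)$. (One should note in passing that $t$ can be taken non-decreasing without loss of generality, or simply argue along the subsequence, so the asymptotics go through.)

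**Main obstacle.** The arithmetic in the last paragraph is routine; the real work — though it is deliberately light in this ``warm-up'' theorem — is choosing the encoding of $K$ so that simultaneously (i) length-$(D+1)2^D$ prefixes genuinely realize all $2^{2^D}$ subsets of $\{0,1\}^D$ with query-suffixes of length exactly $D$ distinguishing them, and (ii) the format remains checkable by a scaffolding automaton via reverse-and-scan with only $O(1)$ work per symbol. The tension is that demanding a fixed-length, fully-enumerated prefix makes the counting clean but puts more syntactic burden on the automaton; conversely a sparse listing makes the automaton trivial but complicates the counting. I expect to resolve this exactly as the counting language does — a rigid, position-indexed block format with same-length gadgets — and the bulk of the proof write-up will be the (short) verification that this format is scaffolding-recognizable, deferring to the techniques already developed in Sections~\ref{sec:example-PEGs} and~\ref{sec:scaffolding-automata}.
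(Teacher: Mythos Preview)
Your approach is essentially the paper's, but you have made it harder than necessary and the ``tension'' you identify is artificial. The paper's language is completely permissive:
\[
  K^r = \{\, x \,\#\, w_1 \,\#\, \cdots \,\#\, w_N \mid x \in \{0,1\}^\ast,\ w_i \in \{0,1\}^\ast,\ \exists i\ x^r = w_i \,\},
\]
with no constraint on $N$, on the lengths of the $w_i$, on distinctness, or on order. The scaffolding automaton for $K^r$ is then trivial: store $x$ in the labels with an edge to its last symbol, and for each subsequent block $w_i$ scan $x$ backwards while reading $w_i$ forwards, setting a ``match found'' bit if they agree. No format checking, no canonical enumeration, no counting-language machinery.

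The equivalence-class count still goes through because the lower bound on $E_K$ only asks you to \emph{exhibit} $2^{2^D}$ pairwise-inequivalent prefixes of length $(D{+}1)2^D$; you are free to choose those prefixes to be nicely formatted (namely $N = 2^D$ blocks each of length $D$) even though the language does not enforce that format. Different choices of the multiset $\{w_1,\ldots,w_{2^D}\}\subseteq\{0,1\}^D$ give different acceptance sets for length-$D$ suffixes, yielding all $2^{2^D}$ classes (the empty acceptance set comes from a deliberately malformed prefix). Your ``distinct, canonical order'' requirement actually breaks the counting, since then the prefix length would depend on $|T|$; and your fallback ``one bit per block'' encoding is a different language that you never fully specify. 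Drop all format constraints and the proof becomes a few lines. One small detail worth noting: the acceptance condition is $x^r = w_i$ rather than $x = w_i$, precisely because scanning the stored string backwards against a forward-read block is what the automaton does for free.
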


\begin{proof}
  Consider the language:
  \[
    K^r = \{ x \# w_1 \# w_2 \# \cdots \# w_N \mid x \in \ZO^\ast, \forall i\; w_i \in \ZO^\ast, \exists i \; x^r = w_i \}.
  \]
  A scaffolding automaton can easily decide $K^r$ by maintaining an edge pointing to the last symbol of $x$, and then for each $w_i$ which it sees, scanning $x$ in reverse and comparing it with $w_i$. Hence $K \in \PEG$ by Theorem \ref{thm:peg-automata}.

  But looking carefully at $K = \{ w_1 \# w_2 \# \cdots \# w_N \# x \mid \exists i \; x^r = w_i \}$, one sees that if we have $N = 2^D$ strings $w_i$ each of length $D$, then the suffixes $x$ that cause acceptance are exactly those $x$'s in the set $\{w_1, \ldots, w_N\}$, and there are $2^N - 1 = 2^{2^D} - 1$ such sets. The empty set may be obtained by a malformed prefix, where none of the $w_i$ has length $D$, but their concatenation, with $\#$ as a separator, still has length $(D+1) 2^D$. Hence $E_K((D+1) 2^D, D) = 2^{2^D}$.

  Now, if $K$ were in $\Online(t(n))$ for $t(n) = o(n / (\log n)^2)$, by Theorem \ref{thm:not-online-method} we would have $E_K((D+1) 2^D, D) \le 2^{D \cdot t((D+1) 2^D)} = 2^{o(2^D)}$, a contradiction.
\end{proof}

\bsni
Now we will show the following:
\begin{theorem}\label{thm:not-online-hard-side}
  There is a language $H\subseteq\{0,1,\#\}^\ast$, decidable by a scaffolding automaton, such that $E_{H}(O(D \cdot 2^D), D ) \ge 2^{2^D}$.

  Hence $H \notin \Online(t(n))$, for any $t(n) = o(n/(\log n)^2)$.
\end{theorem}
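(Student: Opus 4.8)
The plan is to adapt the warm-up construction so that the hard direction — the one where the machine must read the long prefix $w_1 \# \cdots \# w_N$ \emph{after} seeing the short suffix — is itself a PEG-decidable language, rather than merely the reverse of one. The obstacle with the warm-up is that $K$ has $K^r$ decided by a scaffolding automaton, which by Theorem~\ref{thm:peg-automata} puts $K \in \PEG$; but the theorem we now want asserts the existence of a \emph{single} language $H$ which is decided by a scaffolding automaton \emph{and} whose $E_H$ is doubly-exponential. Since $\cL(H) = \cL(H)$ (not $\cL(H)^r$), we need the automaton to process the $w_i$'s first and the test string $x$ last, and we must arrange that the scaffold built while reading $w_1 \# \cdots \# w_N$ records enough to check, upon reading $x$, whether $x = w_i$ for some $i$ — all in real time.

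First I would use the \emph{reverse and scan} trick to let the scaffolding automaton, as it reads each block $w_i$, build a "reversed copy" of $w_i$ into the labels of the scaffold via the $\mathsf{Inverted}$-style construction, and, crucially, maintain a chain of edges: an edge from (the node at the end of) block $w_{i+1}$ back to (the node at the end of) block $w_i$, so that all block-endpoints are threaded together on one backward path. This is exactly the kind of bookkeeping the counting-language automaton of Theorem~\ref{thm:counting} already does. Then, upon reading the final separator $\#$ and beginning to read $x$, the automaton cannot afford to scan all $N$ blocks — but it does not need to, because the equivalence-class count does not depend on the machine efficiently \emph{deciding} $H$; it only needs $H$ to be decided by \emph{some} scaffolding automaton (real-time by definition) while having large $E_H$. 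So I would instead define $H^r$ to be a variant where $x$ comes first, exactly as in the warm-up, obtaining $H \in \PEG$ via Theorem~\ref{thm:peg-automata}; the real content is to get the $E_H$ bound with a \emph{shorter} prefix than $(D+1)2^D$ — namely $O(D\cdot 2^D)$, i.e.\ with the separators' overhead absorbed into the constant.

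The key steps, in order, are: (1) define $H^r = \{ x \# w_1 \# \cdots \# w_N \mid x \in \ZO^\ast,\ w_i \in \ZO^\ast,\ \exists i\ x^r = w_i \}$, as in the warm-up, so that $H \in \PEG$ by the scaffolding-automaton argument already given (edge to last symbol of $x$, reverse-scan against each $w_i$); (2) observe that with $N = 2^D$ blocks each of length exactly $D$, the total length of the prefix $w_1\#\cdots\#w_N\#$ is $(D+1)2^D = O(D\cdot 2^D)$, and the set of accepted suffixes $x$ is precisely $\{w_1,\dots,w_N\}$ viewed as a subset of $\ZO^D$; (3) count: every subset $T \subseteq \ZO^D$ is realizable (nonempty $T$ by listing its elements; $T = \varnothing$ by a malformed prefix, e.g.\ padding to length $O(D\cdot 2^D)$ with no block of length $D$), so two prefixes are $(H, O(D\cdot 2^D), D)$-equivalent iff they induce the same subset $T$, giving $E_H(O(D\cdot 2^D), D) = 2^{2^D}$; (4) apply Theorem~\ref{thm:not-online-method}: if $H \in \Online(t(n))$ with $t(n) = o(n/(\log n)^2)$, then $E_H(\ell, m) \le 2^{O(m\, t(\ell+m))}$ with $\ell = O(D\cdot 2^D)$, $m = D$, $\ell + m = O(D\cdot 2^D)$, so the bound is $2^{O(D \cdot o(D\cdot 2^D / (D + \log D)^2))} = 2^{o(2^D)}$, contradicting step~(3).

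The main obstacle is really a matter of presentation rather than mathematics: one must be careful that the length $O(D\cdot 2^D)$ claimed in the statement genuinely matches the warm-up's $(D+1)2^D$, and that the "malformed prefix realizing $T=\varnothing$" can be padded to \emph{exactly} the right length class so that all $2^{2^D}$ equivalence classes sit at the \emph{same} value of $\ell$ — since $E_H$ is evaluated at a fixed pair $(\ell,m)$. If the paper wants $H$ literally decided by a scaffolding automaton (not merely $H^r$), I would instead keep the blocks-first ordering and have the automaton thread block-endpoints as described, accepting $x$ by walking that backward chain and reverse-scanning each recorded block against $x$ in turn; this is still a legal (real-time) scaffolding automaton because each step touches only a bounded neighbourhood, it just takes one pass over the chain, which is fine since real-time only bounds work \emph{per input symbol read}, and no further input symbols arrive after $x$ is consumed. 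Either way the equivalence-class count is identical, so the lower bound follows; I expect the actual writeup to simply cite the warm-up and note that the separator overhead is a constant factor, making this a short corollary-style proof.
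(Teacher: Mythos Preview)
Your steps (1)--(4) do not prove the stated theorem: they reprove the warm-up verbatim. Defining $H^r = \{ x \# w_1 \# \cdots \# w_N \mid \exists i\ x^r = w_i\}$ and showing that a scaffolding automaton decides $H^r$ yields, via Theorem~\ref{thm:peg-automata}, only that $H \in \PEG$, i.e.\ that $H^r$ is decided by a scaffolding automaton. The present theorem asks for $H$ \emph{itself} to be decided by a scaffolding automaton --- this is the whole point, since Theorem~\ref{thm:non-real-time} needs one language whose reverse is hard (your $K$) and one language which is hard in the forward direction (this $H$), both decided by scaffolding automata, to then combine them. Your main line therefore establishes nothing beyond the warm-up.

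Your fallback paragraph contains the real error. A scaffolding automaton performs \emph{exactly one} transition per input symbol and then halts; there is no post-processing phase (``no further input symbols arrive after $x$ is consumed'' is precisely the problem, not the solution --- no further symbols means no further steps). With the blocks-first ordering $w_1 \# \cdots \# w_N \# x$ and $N = 2^D$, $|x| = D$, the automaton gets only $D$ steps while reading $x$; walking a linear chain through all $2^D$ block-endpoints and reverse-scanning each against $x$ would require $\Omega(2^D)$ steps. So the threaded-chain idea cannot work.

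The paper's proof addresses exactly this bottleneck with a genuinely new construction. While reading the long binary prefix $z$ (of length $O(D\cdot 2^D)$), the automaton uses the binary-counting labelling $(0)_2^r\circ(0)_2\#\,(1)_2^r\circ(1)_2\#\cdots$ from Theorem~\ref{thm:counting}, and simultaneously weaves edges so that, by block number $2^{D+1}$, some node $v$ has a \emph{complete binary tree of depth $D$} embedded beneath it (via edges $0$ and $1$), with all $2^{D+1}-1$ nodes distinct. The input bits of $z$ are copied into the labels. Then, when the short suffix $p \in \ZO^D$ arrives, each bit of $p$ moves one edge down this tree --- so in $D$ real-time steps the automaton reaches one of $2^D$ distinct leaves, and accepts according to the bit of $z$ stored there. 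Since the $2^D$ leaf-positions in $z$ can be filled independently, $E_H(\ell,D) \ge 2^{2^D}$. The substantive work is the tree-building invariant (carrying sub-trees across the increment-by-one carries), which your proposal does not anticipate.
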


The proof of this theorem is significantly more involved, and uses the \emph{reverse and scan} trick we have seen before. So let us first observe that from $K$ and $H$ we may obtain the language $L$ promised by Theorem \ref{thm:non-real-time}. Let $L^r = \{ 0 x 0 \mid x \in K^r \} \cup \{ 1 x 1 \mid x \in H \}$. It is easy to see that $L^r$ has a scaffolding automaton, since $K^r$ and $H$ both do, and so $L \in \PEG$. But an online Turing machine for deciding $L^r$ can be easily converted into an online Turing machine for deciding $H$, and an online Turing machine for deciding $L$ can be converted into an online Turing machine for deciding $K$. Hence neither $L^r$ nor $L$ are in $\Online(t(n))$ for any $t(n) = o(n/(\log n)^2)$.

\begin{proof}[Proof of Theorem \ref{thm:not-online-hard-side}]
  Given $(d, \Sigma)$-scaffold $G = (V, E, L)$ (as per Definition \ref{def:scaffold}) with $d \ge 2$, and a node $v$ of $G$, let us define the map $\Path_{G,v}: \ZO^\ast \to V \cup \{\varnothing\}$, so that $\Path_{G,v}(x_1 \cdots x_n) = v'$ if the sequence of bits $x_1 \cdots x_n \in \ZO^n$ is a valid path from $v$ to $v'$ in $G$ (as per Definition \ref{def:path}). If we repeat Definition \ref{def:scaffold} here, for explicitness, we get that $\Path_{G,v}$ is given inductively by
  \begin{itemize}
  \item $\Path_{G,v}(\lambda) = v$; and
  \item if $\Path_{G,v}(x_1 \cdots x_n) = w \neq \varnothing$ and $e_w$ is the edge list corresponding to node $w$ of $G$, then $\Path_{G,v}(x_1 \cdots x_n x_{n+1}) = e_w(x_{n+1})$; and 
  \item if $\Path_{G,v}(x_1 \cdots x_n) = \varnothing$, then $\Path_{G,v}(x_1 \cdots x_n x_{n+1}) = \varnothing$ also.
  \end{itemize}

  \medskip
  Then let us define the \emph{binary-depth} of $G$ with respect to $v$, $\BinDepth_G(v)$, to be the largest $D \in \bbN$ such that $\Path_{G,v}$ is ``total'' and injective on $\ZO^{\le D}$, i.e.~$\varnothing \notin \Path_{G,v}(\ZO^{\le D})$, and $|\Path_{G,v}(\ZO^{\le D})| = 2^{D+1}-1$. Intuitively explained: when recursively following the first two edges of $v$ and its descendants, we will find a complete binary tree of depth $D$. Note that, although in general, in a scaffold, we can have two distinct paths leading to the same node, our notion of binary-depth requires that all $2^{D+1}-1$ different paths in $\ZO^{\le D}$ lead to distinct nodes of $G$. If $\BinDepth_G(v) \ge D$, we will write $\BinTree_{G}(v, D)$ to denote the complete binary tree of depth $D$, rooted at $v$, obtained by recursively following the first two edges until depth $D$ is reached.

  \medskip
  A scaffolding automaton constructs a scaffold as it processes each new input symbol. We will devise a scaffolding automaton $\cA$ as follows. When $\cA$ is given any binary string $y \in \ZO^\ell$, with
  \begin{equation}
    \label{eq:ell}
    \ell = D + 1 + \sum_{n=0}^{2^{D+1} - 1}( 2 |n|_2 + 2 ) = O(D \cdot 2^D)
  \end{equation}
  where $|n|_2$ is the size of the smallest binary representation of the number $n$, then the resulting scaffold will have binary-depth $\ge D$, with respect to the first child of the top node. Formally said, the computation $\cA(y) = ((q_0, S_0), \ldots, (q_\ell, S_\ell))$ constructs the scaffold $S_\ell = ([\ell], E_\ell, L_\ell)$ having $\BinDepth_{S_\ell}(e_\ell(0)) \ge D$.

    \medskip
  Before showing how this is done let us show why it is enough. The language $H$ will be decided by a scaffolding automaton $\cA'$, in the following way: as long as $\cA'$ only sees $0$s and $1$s, it will run the algorithm of the automaton $\cA$. Besides the labels which $\cA$ places at each node, we also copy the corresponding input bit into that node, i.e.~the working alphabet of $\cA'$ will be the product of the working alphabet of $\cA$ with $\ZO$. Then we see our first separator symbol $\#$, and we stop running $\cA$. Let us call $z$ to the part of the input which precedes the separator symbol. After the separator, we expect to see a string $p \in \ZO^\ast$, and we interpret $p$ as if it were a path down the tree which is embedded in the scaffold. As we read the symbols of $p$, we thus maintain some edge following down this path. In this way we will traverse some bit positions of $z$, and we can see which bits of $z$ appear in these positions, since we have copied the bits of $z$ into the labels; then, whenever $z$ has a $1$ at such a position, we enter an accepting state, and whenever $z$ has a $0$, we enter a rejecting state.

  When $|z| = \ell$ as above, we have a full binary tree of depth $D$, and thus the strings $p \in \ZO^D$ will point to $2^D$ different positions of $z$. These positions are distinct (as required by the definition of binary-depth). Thus there are $2^{2^D}$ ways of filling such positions with bits. Each such way of filling these positions will give a different $(H, \ell, D)$-equivalence class. Hence
  \[
    E_H(\ell, D) \ge 2^{2^D}.
  \]

  \medskip
  Now to construct $\cA$. The base of the method is similar to how we built a scaffolding automaton for the counting language, in Section \ref{sec:expressive-power}. The scaffold constructed by $\cA$ will be labelled by the sequence
  \[
    (0)_2^r \circ (0)_2 \# \; (1)_2^r \circ (1)_2 \# \; \cdots \; (n-1)_2^r \circ (n-1)_2 \# \;  (n)_2^r \circ (n)_2 \# \cdots
  \]
  where for each natural number $k$, $(k)_2$ is its binary representation, and $(k)_2^r$ is the reverse of its binary representation. The characters $\#$ and $\circ$ are being used as separators, so $\#$ is called the \emph{outer separator}, and $\circ$ the \emph{inner separator}. It may be worthwhile to actually write it down:
  \[
    0\circ 0\#\; 1\circ 1\#\; 01 \circ 10 \# \; 11 \circ 11 \#\; 001 \circ 100 \#\; 101 \circ 101 \# \; 011 \circ 110 \# \; \ldots.
  \]
  It is not hard to see that such a labelling can be obtained by a scaffolding automaton: the automaton can copy what is before each inner separator symbol $\circ$ to appear after it in reverse, and then, after writing an outer separator symbol $\#$, it can scan the binary representation of the number $n$, appearing before the $\#$, from the lowest to highest-order bit, and apply the usual algorithm for incrementing a binary number by $1$, thus writing down the binary representation of $n+1$ in reverse.
  The nodes of the scaffold are thus divided into blocks, and the $n$-th block is of the form $(n)_2^r \circ (n)_2 \#$.

  We must now explain how the edges of the tree are added to the scaffold. The invariant we would like to preserve at the $n$-th block, is the following. Suppose $x_k \cdots x_1 = (n)_2$ is the binary representation of $n$, so that the $n$-th block is labelled by
  \[
    x_1 \cdots x_k \circ x_k \cdots x_1 \#
  \]
  Let $v_1 \cdots v_k\; r \; v_k' \cdots v_1'\; s$ be the nodes of the scaffold that get the labels above, i.e., the nodes of the scaffold corresponding to the $n$-th block. Then we would like to maintain the following property:
  
  \begin{invariant} It will always hold, on every block:
    \begin{itemize}
    \item If $x_i = 1$ for some $i \in \{2, \ldots, k\}$, then we will have $e_{v_i}(0) = e_{v'_i}(0)$, and $\BinDepth(e_{v_i}(0)) \ge i-1$. 
    \item Furthermore, for distinct $i, j \in \{2, \ldots, k\}$ with $x_i = x_j = 1$, the trees $\BinTree(e_{v_i}(0), i-1)$ and $\BinTree(e_{v_j}(0), j-1)$ are node-disjoint.
  \end{itemize}

  \end{invariant}
  I.e., one should think that if $x_i = 1$, the first edge leaving $v_i$ and $v_i'$ points to the root of the same full binary tree of depth $i - 1$. And that the two trees corresponding to different $v_i$ and $v_j$ share no node.

  For simplicity, let us momentarily ignore the ``Furthermore'' part of the invariant, and later argue that it will be upheld.

  Now suppose that this invariant holds for the $n$-th block, let us show how the algorithm needs to behave in order to make it hold for the $(n+1)$-th block. Suppose, for simplicity, that $n$ and $n+1$ are both $k$-bit numbers (the case when $n$ is $k$-bits and $n+1$ is $k+1$ bits is similar). Let $x_k \cdots x_1 = (n)_2$ and $y_k \cdots y_1 = (n+1)_2$ be their binary representations. The algorithm constructs the first half of the $(n+1)$-th block by scanning backwards the second half of the $n$-th block.

  So, suppose that the second half of the $n$-th block has nodes $v_k' \cdots v_1'$, which are labelled $x_k \cdots x_1$, respectively. Let $s$ be the node which is labeled by the outer separator $\#$ between blocks $n$ and $n+1$. Suppose that the algorithm is about to add the nodes $w_1 \cdots w_k$ to the first half of the $(n+1)$-th block, and intends to write the labels $y_1 \cdots y_k$ into them. This is done by reading $x_k \cdots x_1$ backwards: when the algorithm writes the label $y_1$ into $w_1$, he has an edge pointing to $v_1'$ where he can read $x_1$, when he writes $y_2$ into $w_2$ he has an edge pointing to $v_2'$, which is labelled by $x_2$, and so on. Such ``backwards scanning'' is easy to do provided we maintain an edge at each node which points to the previous node. The algorithm will also maintain an edge pointing to $s$.

  When incrementing $x_1 = 0$, then we will have $y_1 = 1$, and so we must make sure that $e_{w_1}(0)$ has binary-depth $\ge 0$: this is easily ensured by letting $e_{w_1}(0) = s$, $e_{w_1}(1) = \varnothing$.

  When incrementing $x_1 = 1$, we will have $y_1 = 0$; in this case we set $e_{w_1}(0) = v_1'$, and also set $e_{w_1}(1) = s$. It now holds that $\BinDepth(w_1) \ge 1$, and we will use this as the base case of an induction on the length of the $1$-prefix of $x$. This is illustrated in the figure below, for block number $n = 39$, so that $(n)_2 = x_6 \ldots x_1 = 100111$. So suppose that $x_{i-1} = 1, \ldots, x_1 = 1$ are the labels of $v_{i-1}', \ldots, v_1'$, and that we have written $y_1 = 0, \ldots, y_{i-1} = 0$ as the labels of $w_1, \ldots, w_{i-1}$. We are about to add the node $w_i$ to the first half of the $(n+1)$-th block, using our pointer to $v_i'$ in the second half of the $n$-th block. Suppose by induction that $\BinDepth(w_{i-1}) \ge i - 1$. Now look at $x_i$. If we are not finished with the $1$-prefix of $x$, i.e. if $x_i = 1$, then we must set $y_i = 0$. Our invariant for the previous block tells us that $\BinDepth(e_{v_i'}(0)) = i-1$, and our induction hypothesis gives us $\BinDepth(w_{i-1}) = i - 1$. So we create the new top node $w_i$ with $e_{w_i}(0) = e_{v_i'}(0)$ and $e_{w_i}(1) = w_{i-1}$, so that $\BinDepth(w_i) = i$. This satisfies our induction hypothesis. This case pertains to nodes $w_2$ and $w_3$ of the figure below. If we have reached the point where the carry stops, i.e., if $x_i = 0$, then we will set $y_i = 1$, and for this we create the new top $w_i$ and set $e_{w_i}(0) = w_{i-1}$, $e_{w_i}(1) = \varnothing$. This satisfies our invariant for the first half of $(n+1)$-th block (there is no carry in this case). This case pertains to node $w_4$ of the figure below. Notice how $\BinDepth(w_3) = 3$, i.e., we have a complete binary tree of depth $3$ rooted at $w_3$, which we have drawn in thicker lines for emphasis.

\begin{center}
\begin{tikzpicture}[node distance=1.9cm,>=stealth',bend angle=15,auto]\label{fig:real-time}

fill=black!20,minimum size=4mm]

                          \tikzstyle{snode}=[circle,draw=black,minimum size=6mm]
                          \tikzstyle{every path}=[->,every node/.style={font=\scriptsize}]
  
  \begin{scope}

    \node [snode,label={$1$}] (v'6) {$v'_6$};
    \node [] (x) [above left=0.13cm and 0.1cm of v'6]{$x = $};
    \node [isosceles triangle,draw] (v'6_0) [below left=1cm and 0.3cm of v'6] {\tiny depth $5$};
    \path 
    (v'6) edge [bend left=15] node [swap,at start] {$0$} (v'6_0.1);

    \node [snode,label={$0$}] (v'5) [right of=v'6] {$v'_5$};

    \node [snode,label={$0$}] (v'4) [right of=v'5] {$v'_4$};
    \node [snode,label={$1$}] (v'3) [right of=v'4] {$v'_3$};
    \node [isosceles triangle,draw,very thick] (v'3_0) [below left=1cm and 0.3cm of v'3] {\tiny depth $2$};
    \path 
    (v'3) edge [bend left=15] node [swap,at start] {$0$} (v'3_0.1);
    
    \node [snode,label={$1$}] (v'2) [right of=v'3] {$v'_2$};
    \node [isosceles triangle,draw,very thick] (v'2_0) [below left=1cm and 0.3cm of v'2] {\tiny depth $1$};
\path 
    (v'2) edge [bend left=15] node [swap,at start] {$0$} (v'2_0.1);

    \node [snode,label={$1$},very thick] (v'1) [right of=v'2] {$v'_1$};

    \node [snode,label=45:{$\#$},very thick] (s) [below right=1cm and 0.5cm
    of v'1] {$s$};
    
    \node [snode,label=-90:{$0$},very thick] (w1) [below=2.5cm of v'1] {$w_1$};
    \path
    (w1) edge [bend right=15,very thick] node [swap,at start] {$1$} (s)
    (w1) edge [bend left=15,very thick] node [swap,at start] {$0$} (v'1);
    
    \node [snode,label=-90:{$0$},very thick] (w2) [left of=w1] {$w_2$};
    \path
    (w2) edge [bend right=15,very thick] node [at start] {$1$} (w1)
    (w2) edge [bend right=15,very thick] node [at start] {$0$} (v'2_0.359);
    
    \node [snode,label=-90:{$0$},very thick] (w3) [left of=w2] {$w_3$};
    \path
    (w3) edge [bend right=15,very thick] node [at start] {$1$} (w2) 
    (w3) edge [bend right=15,very thick] node [at start] {$0$} (v'3_0.359);
    
    \node [snode,label=-90:{$1$}] (w4) [left of=w3] {$w_4$};
    \path 
    (w4) edge [bend right=15] node [at start] {$0$} (w3);
    
    \node [snode,label=-90:{$0$}] (w5) [left of=w4] {$w_5$};

    \node [snode,label=-90:{$1$}] (w6) [left of=w5] {$w_6$};
    \path 
    (w6) edge [bend right=15] node [at start] {$0$} (v'6_0.359);

    \node [] (y) [below left=0.13cm and 0.1cm of w6]{$y = $};
    
  \end{scope}
\end{tikzpicture}
\end{center}
  
  Once we find the first $x_i = 0$, we proceed by copying the remaining nodes and their edges; i.e.~we set $y_i = x_i$, $e_{w_i}(0) = e_{v_i'}(0)$, $e_{w_i}(1) = e_{v_i'}(1)$, until we find the inner separator $\circ$. After the inner separator $\circ$, we simply copy what we have done, i.e.~we set $y_i' = y_i$, $e_{w_i'}(0) = e_{w_i}(0)$, $e_{w_i'}(1) = e_{w_i}(1)$, until we find the outer separator $\#$.

  To keep things simple we have not considered the ``Furthermore'' part, so let us deal with it now. We have $y_1=0, \ldots, y_{i-1} = 0, y_i = 1$ for some $i$, which is the last point reached by the carry. Now notice that $\BinTree(w_{i-1}, i-1)$ (which is the tree under $w_3$ in the figure above) is made from ``fresh'' nodes, which did not previously belong to a tree, namely $w_1, \ldots, w_{i-1}$, $s$, and $v'_1$, together with the sub-trees $\BinTree(e_{v_j'}(0), j-1)$, for $1 < j < i$. These subtrees are, by the ``furthermore'' part of the invariant, disjoint from any sub-trees $\BinTree(e_{v_j'}(0), j-1)$ with $j \ge i$. Hence $\BinTree(w_{i-1}, i-1)$ will also be disjoint from $\BinTree(e_{w_j}(0), j-1)$, for $j \ge i$.

  \bsni
  The result of the above is that block number $2^{D+1}$ will have the labels
  \[
    0^D 1 \circ 1 0^D \#
  \]
  and if we let $v$ be the node which is labelled by the first $1$ appearing in this block, then we will have $\BinDepth(v) = D$. The expression (\ref{eq:ell}) for $\ell$ is simply the position of the input bit corresponding to the node $v$: we have $2^{D+1}-1$ many blocks before we reach the $2^{D+1}$-th block, and the $n$-th block has size $2|n|_2 + 2$; then we have the $D + 1$ symbols $0^D 1$, the last of which is at the position when the node $v$ is the top of the scaffold.
\end{proof}

\section*{Acknowledgement}

\thankstext

\section*{Bibliography}
\bibliographystyle{elsarticle-num}
\bibliography{pegcfl}

\end{document}